\documentclass[a4paper, 11pt]{article}
\usepackage{tikz-cd}
\usepackage{geometry}
\usepackage{cancel}
\usepackage{empheq}
\usepackage{mathtools}
\usepackage{amsthm}
\usepackage{amsfonts}
\usepackage[backend=biber]{biblatex}
\usetikzlibrary{calc, arrows.meta}

\def\R{\mathbb{R}}
\def\C{\mathbb{C}}
\def\A{\mathbb{A}}
\def\F{\mathbb{F}}
\def\L{\mathbb{L}}

\def\K{\mathbb{K}}

\def\al{\alpha}
\def\be{\beta}
\def\ga{\gamma}
\def\de{\delta}
\def\ep{\epsilon}
\def\ze{\zeta}

\def\io{\iota}
\def\ka{\kappa}

\def\si{\sigma}

\def\om{\omega}

\def\La{\Lambda}
\def\Si{\Sigma}

\def\na{\nabla}
\def\pa{\partial}

\def\_#1{{}^{\boldsymbol{\cdot}}_{#1}{}}
\def\^#1{{}_{\, \boldsymbol{\cdot} \,}^{#1}{}}

\newtheorem{lma}[equation]{Lemma}
\newtheorem{corollary}[equation]{Corollary}

\makeatletter

\def\Ri{\futurelet\next\@Ri}

\def\@Ri{%
  \ifx\next(%
    \let\@RiDo\@RiOpt
  \else
    \let\@RiDo\@RiNoOpt
  \fi
  \@RiDo
}

\def\@RiNoOpt#1_#2.{%
  R^{\,#1}{}_{#2}%
}

\def\@RiOpt(#1)#2_#3.{%
  {}^{#1}\!\@RiNoOpt#2_#3.%
}

\makeatother

\def\Rs(#1){\Ri(#1)_.}

\def\we{\wedge}
\def\tp#1{{}^\ast\!#1}
\def\np#1{{}_\circ#1}

\def\SU(#1){\mathrm{SU}(#1)}
\def\SL(#1){\mathrm{SL}(#1)}
\def\SO(#1){\mathrm{SO}(#1)}
\def\sl(#1){\mathfrak{spin}(#1)}
\def\su(#1){\mathfrak{su}(#1)}
\def\spin(#1){\mathfrak{spin}(#1)}
\def\Spin(#1){\mathrm{Spin}(#1)}

\def\dunderline#1{\underline{\underline{#1}}}

\def\chr#1#2#3#4{\{#1\}^#2_{#3 #4}}

\def\vec#1{\vbox{\offinterlineskip\ialign{%
    \hfil##\hfil\crcr %
    $\scriptscriptstyle\rightharpoonup$\crcr 
    \noalign{\kern 1pt}
    $#1$\crcr
}}}

\def\cvec#1{\vtop{\offinterlineskip\ialign{%
    \hfil##\hfil\crcr
    $#1$\crcr
    \noalign{\kern 1pt}
    \hidewidth$\scriptscriptstyle\rightharpoondown$\hidewidth\crcr
}}}

\makeatletter
\renewcommand{\author}[2][]{%
  \ifx\@authors\@undefined
    \def\@authors{#2\textsuperscript{\it #1 \/ }}
  \else
    \g@addto@macro\@authors{\quad #2\textsuperscript{\it #1 \/ }}
  \fi
}

\newcommand{\affiliation}[2][]{%
  \ifx\@affils\@undefined
    \def\@affils{\textsuperscript{\it #1\/}\ {\small #2}}
  \else
    \g@addto@macro\@affils{\\ \textsuperscript{\it #1 \/}\ {\small #2}}
  \fi
}

\patchcmd{\@maketitle}
  {\begin{center}}
  {\begin{center}
   \let\footnote\thanks}
  {}{}

\renewcommand{\@author}{%
  \begin{tabular}[t]{c}
    \@authors \\\\
    \small \@affils
  \end{tabular}
}
\makeatother

\newif\ifhide

\def\eatrow#1{}
\def\showrow#1{& #1\\}

\def\Hide#1{\ifhide
  \expandafter\eatrow
\else
  \expandafter\showrow
\fi{#1}}

\title{A Lagrangian framework for canonical analysis \\ for the Holst model with $\be = 0$}
\date{}
\author[a]{Roberto Ciccarelli}
\author[a,b]{Lorenzo Fatibene}
\affiliation[a]{Department of Mathematics {\it ``Giuseppe Peano''}, University of Torino (Italy)}
\affiliation[b]{Istituto Nazionale di Fisica Nucleare (INFN), Sezione di Torino (Italy)}
\numberwithin{equation}{section}

\hidefalse

\begin{document}
\maketitle

\begin{abstract}
We perform a canonical analysis of the Holst model for General Relativity, within the framework laid out in \cite{fatibeneLectureNotes1, orizzonteBarberoImmirziConnectionsHow2021}, distinguishing our approach by setting the Barbero parameter to $\beta=0$ and leaving the lapse and shift functions unconstrained. The $\be = 0$ choice is of particular interest because it is viable across all dimensions, providing a necessary foundation for extending the Loop Quantum Gravity formalism beyond $3+1$ dimensions. 

Through field decomposition and the projection of the field equations, we derive a system of 37 equations (10 {\it differential constraints}, 21 {\it algebraic constraints}, and 6 {\it evolution equations}) exactly matching the 37 field components to be determined. Moreover, leaving the gauge unfixed reveals that three equations, which are typically identically satisfied under normal evolution, are actually differential constraints whose triviality depends on specific gauge choices. The resulting framework remains fully consistent with the standard $3+1$ decomposition of the Einstein equations without requiring any constraints on the lapse and shift functions.
\end{abstract}

\section{Introduction}
General Relativity (GR) is the (classical) theory of gravitation, that is based on the idea that the gravitational attraction arises as a consequence of the geometry of spacetime.
In this framework, spacetime is described by a manifold $M$ and  the gravitational field by a Lorentzian metric on it. It is however well known that the metric field is not the only possible way to describe the geometric properties of a manifold. An alternative approach is the so called \textit{frame-affine} formulation of General Relativity, where the metric is replaced by a frame field and an independent spin connection. 

The Holst model is a modification of the frame-affine formulation of General Relativity, that is the starting point for the Loop Quantum Gravity approach. 
It is based on the introduction of an extra term in the action, coupled with a dimensionless {\it Holst parameter\/ } $\ga\not=0$ so that the resulting model is dynamically equivalent to standard GR for any value of $\ga$. For a broad discussion on the Loop Quantum Gravity, see \cite{rovelliCovariantLoopQuantum2015, rovelli2004quantum,Ashtekar_2021, Thiemann:2002nj,Thiemann:2007pyv,AbhayAshtekar2004, Thiemann2003} and references quoted therein. Canonical analysis of the Holst model is performed also in \cite{geillerNoteHolstAction2013, montesinos}.
In \cite{fatibeneLectureNotes2, fatibeneLectureNote3} the notation and the geometric setting of the theory are laid out in detail, while in \cite{Fatibene_2007a, Fatibene_2007, Fatibene:2012px, orizzonteBarberoImmirziConnectionsHow2021} a discussion on the existance of the Barbero-Immirzi connection on spacetime is given.

In this work we are interested in performing the canonical analysis of the Holst model with  the {\it Barbero parameter} $\be = 0$, which appears 
when one introduces the Ashtekar-Barbero-Immirzi variables. 

Often in the literature one finds $\ga=\be$ (see \cite{physics4040072} for an historical review). We consider the two parameters unconstrained instead, for a couple of different reasons.
First of all, the Holst parameter is dynamical, while the Barbero parameter is kinematical, and it would be a bit suspicious to set them equal.
Moreover, while the Holst parameter is purely related to dimension $m=4$, (we want to check here that) the value $\be=0$ is viable in all dimensions, 
even though non zero values are available in dimension $m=4$, only.

Hereafter, we set $m=4$ and we are interested in checking that we can cleanly obtain the constraints and evolution equations without any restriction on the value of the Holst parameter $\ga$, which is a dynamical parameter, and because it is necessary if one wants to extend the LQG formalism in dimension different from $3+1$.
We agree that extending LQG to dimensions different from $m=4$ is a purely mathematical issue, still we know many things go through extending standard GR to different dimensions and signatures. We believe that this would highlight the structure of the theory; see also \cite{Fatibene:2012px}.

In the first section we will briefly introduce the Holst model and the Ashtekar-Barbero-Immirzi variables on spacetime, instead of on space only, by a canonical transformation, as one usually does. Moreover, our analysis is completely Lagrangian and because of that it may give insight to the spin foam formulation in future investigations.
In the first section we will also re-write the field equations in terms of Ashtekar-Immirzi variables. 
We shall also discuss global topological restrictions for the formalism to be viable. Even though in dimension $m=4$ many bundles are constrained to be trivial, the bundles structure is needed to single out transformations that are relevant, e.g.~$\SU(2)$-gauge transformations.

In the second section we will perform the canonical analysis of the field equations, obtaining the constraint and evolution equations; \cite{Fatibene:2015ita,Fatibene:2016usr}. Finally, in the last section we will draw some conclusions and outline some future perspectives.

We add some appendices to present technical results and keep the analysis clean.

\section{Holst model}

Let us first discuss topological conditions and constructions. If one is interested in a local formulation, this part can be skipped and one can jump ahead to subsection \ref{sec:choosing_beta}.

Let us start from a $4-$dimensional manifold $M$, called the \textit{spacetime}, and a principal bundle $[P \xrightarrow{\pi} M]$ with structure group $\Spin(3,1)\simeq \SL(2, \C)$. The variables of Holst theory are a spin frame $e_a = e_a^\mu \pa_\mu$, defined as a global map $e: P \to L(M)$ vertical and equivariant with respect to the double covering map $\ell: \Spin(3,1) \to \SO(3,1)$, and a $\Spin(3,1)$ principal connection $\om = dx^\mu \otimes \left(\pa_\mu - \hat \om^{ab}_\mu\si_{ab}\right)$ on $P$, where $\si_{ab}$ is a basis of vertical right invariant vector fields on $P$. For more details on the geometric setting of the theory, see \cite{fatibeneLectureNotes1}.
In particular, for this construction to be possible, we need that $M$ has vanishing first and second Stiefel-Whitney classes, so that it also allows global Lorentzian metrics and eventually spacetime allows global spinor fields.

In this setting, the Hilbert-Einstein action in vacuum with cosmological constant $\La$ can be modified with the addition of a term, depending on the {\it Holst parameter} $\ga$, to obtain the so called \textit{Holst action}, which is the starting point for the Loop Quantum Gravity approach to quantum gravity. The Lagrangian of the Holst model is
\begin{equation}
L _\text{Holst}= \tfrac 1 {4\ka} \left[ (\hat R^{ab} \we e^c \we e^d)\ep_{abcd} + \tfrac 2 \ga \hat R^{ab} \we e_a \we e_b - \tfrac \La 4 \ep_{abcd} e^a\we e^b \we e^c \we e^d \right],
\label{eq:holst_lagrangian}
\end{equation}  
where $\ga \in \R \setminus \{0\}$ is the \textit{Holst parameter}.
Here $e^c = e^c_\mu dx^\mu$ is the spin coframe (which exist global if $M$ is a {\it spin manifold}, i.e.~its 1st and 2nd Stiefel-Whitney class are zero), $\hat R^{ab} = \tfrac 1 2 \hat R^{ab}{}_{\mu\nu}dx^\mu \we dx^\nu$ is the curvature of $\hat \om^{ab}$ and $\hat \na$ is the covariant derivative induced by $\hat \om^{ab}$. Latin lowercase indices of the first part of the alphabet $a, b, \ldots$ are indices on $\Spin(3,1)$ and run from $0$ to $3$, while greek indices are indices on spacetime and run from $0$ to $3$. Latin indices in the central part of the alphabet $i, j, \ldots$ are indices on $\SU(2)$ hence ranging on $1,\ldots, 3$,
while the first capital Latin $A, B$ alphabet are indices  on space and also running from $1$ to $3$.

Let us stress that $\hat \om^{ab}_\mu$ is a connection that, a priori, is chosen independently from the spin connection induced by the spin frame $e_a$, i.e. $\tilde \om^{ab}_\mu = e^a_\al(\chr g\al\be\mu e^{b\be} + d_\mu e^{b\al})$, where $g = e^a_\al\eta_{ab}e^b_\nu dx^\mu \otimes dx^\nu$ is the metric induced by the frame. 

This theory is known to be equivalent to standard General Relativity for every value of the parameter $\gamma \neq 0$, and the field equations obtained via the variational principle are
\begin{equation}
    \begin{dcases}
    \hat\na  B_{ab} = 0\\
    \hat R^{ab} \we \left(\ep_{abcd} e^c  + \tfrac 2 \ga e_{[a} \eta_{b]d}\right) = \tfrac \La 3 \ep_{abcd} e^a \we e^b \we e^c 
\end{dcases}
\label{eq:holst_equations}
\end{equation}
where $B_{ab} = e^c \we e^d \ep_{abcd} + \tfrac 2 \ga e_a \we e_b$ for brevity. The first equation is algebraic and linear in the $\Spin(3,1)$-connection and implies that $\hat \om^{ab} = \tilde \om^{ab}$, while the second one is the Einstein equation in the frame-affine formulation of General Relativity.

\subsection{Ashtekar-Barbero-Immirzi variables on spacetime}
The group $\Spin(3,1)\simeq\SL(2,\C)$ is not compact, and a compact structure group is desirable when dealing with quantisation; however, $\Spin(3,1)$ has a compact subgroup, namely $\Spin(3) \simeq \SU(2)$ that can be used to construct a \textit{reduction} of $P$ to $\SU(2)$, that is a $\SU(2)$-principal bundle $\Si \xrightarrow{\ \tau \ } M$ together with a map $\io: \Si \to P$  vertical and equivariant with respect to the group homomorphism $i : \SU(2) \to \SL(2,\C)$, that is $\io(p\cdot U) = \io(p) \cdot i(U)\ \forall \, p \in \Si, U \in \SU(2) $. Such a reduction is, in general, not guaranteed to exist due to topological obstructions; however a reduction from $\Spin(n,1)$ to $\Spin(n, 0)$ is proven to exist, provided that the spacetime is a spin manifold,  for every dimension $n$ (see \cite{orizzonteBarberoImmirziConnectionsHow2021}).

The existence of the reduction allows us to define the $\SU(2)$ spin frame $\ep^\mu_a$ as $\ep = e \circ \io$. However, see \cite{orizzonteBarberoImmirziConnectionsHow2021}, it is not sufficient to deal with the principal connection $\hat \om$ and obtain a $\SU(2)$ connection, because the horizontal subspaces defined by $\hat \om$ might not be tangent to $\io(\Si)$. This fact is assured in case a \textit{reductive splitting} exists. For the purpose of this work, it is sufficient to state that a reductive splitting is a map $\Phi: \spin(n,1) / \spin(n, 0) \to \spin(n,1)$ such that its image in $\spin(n,1)$ is invariant with respect to the adjoint action of $\Spin(n,1)$ restricted to $\Spin(n,0)$. For a detailed treatment, see \cite{kobayashiFoundationsDifferentialGeometry1996}. Indeed, for $n = 3$ there exists a whole family of reductive splittings $\Phi_\beta$,  while if $n > 3$ a single one, $\Phi_0$, is available. 

The maps $\Phi_\beta$ allow us to split $\hat\om$ into a $\SU(2)$ connection $\A = dx^\mu \otimes \left(\pa_\mu - \A^i_\mu \si_i\right)$, called the \textit{Barbero-Immirzi} connection, and a $\su(2)$-valued $1-$form $k = k^i_\mu dx^\mu \otimes \si_i$, the \textit{Immirzi tensor}. These new variables, that are the Ashtekar-Barbero-Immirzi variables on spacetime, are defined as follows 
\begin{equation}
    %\begin{aligned}& 
    \A^i_\mu = \tfrac 1 2 \ep^i{}_{jk}\hat\om^{jk}_\mu + \be \hat \om^{0i}_\mu%\\
    \qquad\qquad%& 
   k^i_\mu = \hat \om^{0i}_\mu
	%\end{aligned}
\label{eq:change_om}
\end{equation} 

Let us remark that, following \cite{Fatibene_2007,Fatibene_2007a,orizzonteBarberoImmirziConnectionsHow2021} we define  {\it Ashetekar-Barbero-Immirzi (ABI) connection} on spacetime. When we restrict to space we obtain exactly the ABI connection. The reduction $\Si\to P$ is essential to define $\SU(2)$-gauge transformations and show these are $\SU(2)$-fields, 
in particular $\A^i_\mu$ is an $\SU(2)$-connection.

\subsection{Choosing values for $\be$}
\label{sec:choosing_beta}

In literature, the value of $\be$ is chosen to be $\be = \ga$. As outlined in \cite{fatibeneLectureNotes1}, this is an awkward choice, since the parameter $\gamma$ is a dynamical one --- changing it implies changing the action of the theory, even though it is still equivalent to General Relativity --- while $\be$ is kinematic, it is just parametrising the algebraic variable change $\hat \om^{ab}_\mu \mapsto \left(\A^i_\mu, k^i_\mu\right)$. In \cite{fatibeneLectureNotes1} is then shown how to obtain the equations that are the starting point for the Loop Quantum Gravity approach in dimension $m = 4$ for a generic $\be \neq 0 \neq  \ga$. 

The case $\be = 0$ is often overlooked, although it is interesting for mainly two reasons. First, the prescription $\be \neq 0$ is no more necessary\footnote{The condition $\gamma \neq 0$ is necessary because, when taking $\ga \to 0$, the first term in Holst Lagrangian becomes negligible and the dynamical equivalence with standard General Relativity is lost.} when $\be \neq \ga$. 
The second reason is that in dimension $m \neq 4$ it is still possible to define a Holst action with no Holst term (which corresponds to the limit $\ga\rightarrow \infty$) and reduce fields to $\Spin(n)$ objects. However, the only reductive splitting allowed is $\Phi_0$. It is then clear that the study of the case $\be = 0$ is necessary if one wants to discuss and extend the LQG formalism in any dimension other than $3+1$.

\subsection{Field equations in terms of Ashtekar-Barbero-Immirzi variables}
Let us then choose $\be = 0$ and re-write the field equations \eqref{eq:holst_equations} in terms of the new variables $\A^i_\mu$ and $k^i_\mu$. These equations can be obtained both by substituting \eqref{eq:change_om} in \eqref{eq:holst_lagrangian} and by applying the variational principle, or by substituting \eqref{eq:change_om} directly in \eqref{eq:holst_equations}.

First of all, consider  \eqref{eq:change_om} and its inverse with $\be = 0$
\begin{equation}
    \begin{aligned}
    &\A^i_\mu = \tfrac 1 2 \ep^i{}_{jk} \hat \om^{jk}_\mu  \qquad && \hat \om^{jk}_\mu = \tfrac 1 2\ep_{i}{}^{jk} \A^i_\mu \\
    & k^i = k^i_\mu dx^\mu && k^i_\mu = \om^{0i}_\mu
    \end{aligned}
    \label{eq:variables_change}
\end{equation}

The $\SU(2)$-connection $\A^i$ defines a covariant derivative $\na a^i = d a^i + \ep^i{}_{lm}\A^m \we a^l$ and its curvature is
\begin{equation}
\F^i = \tfrac 1 2 \F^i_{\mu\nu} dx^\mu \we dx^\nu = \tfrac 1 2 \left(d_\mu \A^i_\nu - d_\nu \A^i_\mu - \ep^i{}_{jk} \A^j_\mu \A^k_\nu\right) dx^\mu \we dx^\nu
\end{equation}
With straightforward substitution, re-write the curvature $\hat R^{ab}$ of $\hat\om^{ab}_\mu$ with respect to $k^i$ and $\F^i$
\begin{align}
    \hat R^{0i} &= \na k^i \\ 
    \hat R^{ij} &= \tfrac 1 2 \F^k\ep_k{}^{ij} + k^i \we k^j
\end{align}
We can also define also the $2-$forms valued in $\spin(3)$
\begin{align}
    \K_i &= \tfrac 1 2 B_{0i} = \left(\tfrac 1 2 \ep_{ijk} e^j \we e^k - \tfrac 1 \ga e^0 \we e_i\right) \\ 
    \L_k &= \tfrac 1 2 \ep^{ij}{}_k B_{ij} = e^0 \we e_k + \tfrac 1 {2\ga}\ep_{klm} e^l \we e^m
\end{align}

Then consider the $(0i)$ and the $(ij)$ component of the two equations in \eqref{eq:holst_equations} and, up to a linear combination of the equations, the field equations can be re-written as
\begin{subequations} 
\begin{empheq}[left=\empheqlbrace]{align}
              &\na \left(\L^k + \ga \K^k\right) = \ep^k{}_{lm} k^l \we \left(\K^m - \ga \L^m\right) \label{eq:field_1}\\ 
          &\na \left(\K^k - \ga \L^k \right) = - \ep^k{}_{lm} k^l \we \left(\L^m + \ga \K^m\right) \label{eq:field_2}\\
          &\F_k\we e^k - \tfrac 1 \ga \na k^k \we e_k + \tfrac 1 2 \ep_{kij} k^i \we k^j \we e^k = \tfrac \La 6 \ep_{ijkl} e^i \we e^j \we e^k \label{eq:field_3}\\
          &\begin{aligned}[t]
            &\left(\tfrac 1 \ga \ep^h{}_{ij}\F^i + e^h{}_{ij}\na k^i  - \tfrac 1 \ga k^h \we k_j \right)  \we e^j = \\ 
          &\qquad = - \left( \F^h - \tfrac 1 \ga \na k^h + \tfrac 1 2 \ep^h{}_{ij} k^i \we k^j - \tfrac \La 2 \ep^h{}_{ij}e^i \we e^j\right) \we e^0  \end{aligned}\label{eq:field_4}
    \end{empheq}
          \label{eq:field_spacetime}
\end{subequations}

\vspace{0.1cm}

These equations are $(3\cdot 4) + (3\cdot 4)  + (1\cdot 4) + (3\cdot 4) = 40$ equations in the fields $(e^0_\mu, e^i_\mu, \A^i_\mu, k^i_\mu)$, that is $ 4 + (3\cdot 4) + (3\cdot 4) + (3 \cdot 4) = 40$ unknowns.

\section{Canonical analysis of the field equations}
In order to obtain the constraints and evolution equations, we are now interested in defining a Cauchy problem for the field equations (see \cite{fatibeneLectureNotes2}). 

Consider a $4-$dimensional compact subset $\bar D \subset M$ and a vector field $\ze$ on $\bar D$ that is non-zero on $D := \mathrm{Int} (\bar D)$ and vanishing on the boundary $\pa D$. If the set of its integral curves forms a manifold, the induced foliation of $D$ defines a trivial principal bundle $[D \to S]$. Then, fix a section $S_0 \in D$ and drag it along the flow of $\ze$ to obtain a regular foliation $S_t \xhookrightarrow{\ i_t\ } D$. 
The section $S_0$ is called the Cauchy surface, together with the values of the fields restricted on it, it defines an \textit{initial condition} for the Cauchy problem.  

\subsection{Decomposition of the fields}
Consider the canonical covector $\cvec u$ defined as $\cvec u(v) = 0$ if and only if a vector $v$ is tangent to the leaves. Once a metric $g$ is defined, we can consider the unit normal covector $\cvec n$ and the unit normal vector $\vec n$ so that $g(\vec n, \vec n) = -1$ and $\cvec n(\vec n) = 1$. Choose now coordinates $(t, x^A)$ adapted to the foliation, so that $\ze = \pa_0$ so that two bases are induced on $T_xD$, namely $(\pa_0, \pa_A)$ and $(\vec n, \pa_A) $. The change of basis is given by 
\begin{equation}
    \pa_0 = N \vec n + \be^A \pa_A
\end{equation} 
where $N$ is the {\it lapse function}, $\be^A\pa_A$ is the {\it shift vector} and $\bar N = 1/N$.

\paragraph{Decomposing tensor fields.} The normal vector and covector are then expressed in adapted coordinates as 
\begin{equation}
    \vec n = \bar N \left(\pa_0 - \be^A \pa_A\right) = \bar N \vec m, \qquad \cvec n = - N dt.
\end{equation}
and define a projection of any tensor field onto tangents and normal directions. In general, a $(r,s)-$ tensor field will need to be projected $r+s$ times, but since in the theory we only have vectors $v$ and forms $\al$, we will have only two projections for each field, the \textit{tangent projection} $\tp(\cdot)$ and the \textit{normal projection} $\np(\cdot)$, which can be defined as follows
\begin{equation}
      \tp \al = (i_t)^* \al,  \quad\quad \np\al = \al(\cvec n), \quad\quad 
      \tp v = v + \cvec n(v)\vec n,   \quad\quad \np v = \cvec n (v).  
\end{equation}
In adapted coordinates we can express the projections as 
\begin{subequations}
    \renewcommand{\theequation}{\theparentequation\alph{equation}}
    \begin{align}
        &\tp\al = \al_A dx^A, && \np \al  = \bar N\left(\al_0 - \be^A \al_A\right). \label{eq:projections_alpha} \\ 
        &\tp v = v^A \pa_A, && \np v = - N v^0. \label{eq:projections_v}
    \end{align}
    \label{eq:projections}
\end{subequations}
A tensor that occupies a relevant role in the theory is the metric $g_{\mu\nu} := e^a_\mu \eta_{ab} e^b_\nu$ induced by the frame. Since it is closely related to the definition of the normal vector $\vec n$, its projections (together with the ones of its inverse) are easily computed to be
\begin{subequations}
\begin{align}
    & g_{00} = - N^2 + \ga_{AB} \be^A \be^B    && g_{0A} = \ga_{AB} \be^B  && g_{AB} = \ga_{AB} \\     
    & g^{00} = -\bar N^2  && g^{0A} = \bar N^2 \be^A  && g^{AB} = \ga^{AB} - \bar N^2 \be^A\be^B.
\end{align}
\label{eq:projections_metric}
\end{subequations}

\paragraph{Decomposing the frame.} Thanks to gauge transformations of $\Spin(3,1)$, it is always possible to \textit{adapt} the spin frame $e_a$ so that $e_0 = \vec n$ and $e_i$ is tangent to $S_t$. Therefore a \textit{triad}  $\ep_i = \tp e_i$ is defined and the projections of the frame are simply given by 
\begin{equation}
    \tp(e_0) = 0 \qquad \np(e_0) = 1  \qquad \tp (e_i) = \ep_i \qquad  \np(e_i)  = 0.
    \label{eq:projections_e}
\end{equation}

The $16$ degrees of freedom of $e_i^\mu$ are then comprised in the triad $\ep_i^A$ ($9$ components), in the choice of the normal vector $\vec n$ ($4$ components) and in the $3$ boosts that we used to adapt the frame. Computations involving $de^i$ and some expressions of the frames, needed to decompose the equations, are exposed in Appendix \ref{appendix:projections}.

\paragraph{Auxiliary fields.} 
The connection $\hat\om$ is not the only connection that we can build from the fields of theory: in fact, the adapted tetrad induces a Levi Civita connection $\tilde \om^{ab}$. Since we know that in the frame-affine formulation of General Relativity the equality $\hat \om^{ab} = \tilde \om^{ab}$ arises as an algebraic consequence of the field equations, we can consider the difference of the Barbero connection and Immirzi form of $\hat \om$ to the ones of $\tilde \om$. We can define therefore 
\begin{equation}
 z^i = \A^i - \tilde \A^i, \qquad \quad h^i = k^i - \tilde k^i
\end{equation}
since we expect that the field equations will imply $z^i = 0$ and $h^i = 0$ and the Levi-Civita connection simplifies the computations, being torsionless and frame compatible, and can be expressed as a function of the triad. The projections of $\A^i, k^i, z^i $ and $h^i$, together with their covariant derivatives, can be computed using \eqref{eq:projections} and can be found in \ref{appendix:projections}. 

Moreover, denoting with $\ep$ the determinant of the triad, that is
\begin{equation}
\ep = \det(\ep_i^A) = \tfrac 1 {3!} \ep_{ijk} \ep^i_A \ep^j_B \ep^k_C \ep^{ABC},
\label{eq:detepsilon}
\end{equation}
we can define the densitised triad as
\begin{equation}
    E_i = \tfrac 1 2 \ep_{ijk} e^j \we e^k.
\end{equation}

Finally, we need an \textit{evolution operator} that controls the time evolution of the fields. We will denote it as $\de_m$ and will be defined case by case, and we will denote by $D_A(\cdot)$ the covariant derivative with respect to $\A^i$ on $S$, while with $d_\ast(\cdot)$ the differential of forms on $S$.

\paragraph{Splitting equations.} Our aim is, then, to write the tangent and normal projections of the equations in \eqref{eq:field_spacetime} in terms of the metric $\ga_{AB}$, the curvature of the Barbero connection $F_{AB}$, the $3$-curvature ${}^3R^{AB}$, the extrinsic curvature $\chi_{AB}$, the auxiliary fields $(E_i, \tp z^i, \np z^i, \tp h^i, \np h^i)$ and their covariant derivatives with respect to $\A^i$ on $S$. 

The first two equations in \eqref{eq:field_spacetime} can be treated separately from the last two, since the former do not depend on the Immirzi form $k^i$ and the curvature of $\A^i$, while the latter do not depend on $\K^i$ and $\L^i$. Moreover, from the equations in metric-affine formulation of General Relativity, we know that the first two equations will give us some algebraic constraints that will allow us to simplify the computations for the last two equations. We expect to obtain $\tp z^i_A = 0, \np z^i = 0, \tp h^i_A = 0, \np h^i = 0$ from algebraic constraints.

We will then start by treating the first two equations and then we will move to the last two, using the results obtained from the first two equations to simplify the computations.

\subsection{Projecting the first two equations}

\subsubsection{Tangent projection of the first equation: Gauss constraint}

The tangent projection of \eqref{eq:field_1} gives a differential constraint, known as the Gauss constraint. Start from 
\begin{equation*}
   \tp\Big(\na\big(\L^k + \ga \K^k\big)\Big) = \ep^k{}_{ij} \tp k^i \we \tp\left(\K^j - \ga \L^j\right)
\end{equation*}
and, noticing that the right hand side is zero because of \eqref{eq:KL_tangent}, we can rewrite the left hand side using  \eqref{eq:naLK_tangent} and obtain
\begin{equation}
    \boxed{D E_k = 0}  \label{eq:gauss_constraint}.
\end{equation}

Moreover the following corollary holds:
\begin{corollary}
    \label{cor:skew_z}
    The skew-symmetric part of $\tp z_{AB}$ is zero
    \[
        \tp z_{[AB]} = 0
    \]
\end{corollary}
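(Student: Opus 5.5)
The plan is to redo the computation that leads from the tangent projection of \eqref{eq:field_1} to the Gauss constraint \eqref{eq:gauss_constraint}, but this time without discarding the terms that contain $\tp z^i$. The corollary is then read off from what those terms must satisfy.

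First I expand $\tp\big(\na(\L^k+\ga\K^k)\big)$ using \eqref{eq:naLK_tangent}. On $S_t$ the adapted frame gives $\tp e^0=0$, so the tangent part of $\L^k+\ga\K^k$ reduces to the spatial 2-forms $\tfrac1{2\ga}\ep^k{}_{lm}e^l\we e^m+\ga E^k$. Their covariant derivative with respect to $\A^i=\tilde\A^i+z^i$ splits into two pieces:
\[
(\tfrac1\ga+\ga)\Big(\tilde D E^k+\ep^k{}_{lm}\,\tp z^m\we E^l\Big).
\]
The Levi-Civita piece $\tilde D E^k$ vanishes identically, because $\tilde\A$ is torsionless and compatible with the triad, so that $\tilde D e^i=0$ on $S_t$. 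The right-hand side of the projected equation vanishes by \eqref{eq:KL_tangent}. Since $\ga+1/\ga\neq0$ for real $\ga$, the projected equation is therefore equivalent to
\[
\ep^k{}_{lm}\,\tp z^m\we E^l=0 .
\]
The boxed statement $DE_k=0$ encodes exactly this condition.

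Next I write this condition in components. With $\tp z^m=z^m_A dx^A$, I set $\tp z_{AB}=z^m_A\ep_{mB}$, i.e.\ I convert the internal index with the triad. I also use $E^l=\tfrac12\ep^l{}_{pq}e^p\we e^q=\tfrac12\ep\,\ep^l{}_{pq}\ep^p_B\ep^q_C\,dx^B\we dx^C$. The 3-form $\tp z^m\we E^l$ is then $\ep$ times a scalar density multiplying $dx^1\we dx^2\we dx^3$, and that scalar is essentially $z^m_A\ep_l^A$. Contracting with $\ep^k{}_{lm}$ gives
\[
\ep^k{}_{lm}\, z^m_A\,\ep^{lA}=0,
\]
which says that the matrix $z_{lm}=z_{mA}\ep_l^A$ has vanishing antisymmetric part in $(l,m)$.

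Finally I convert back to space indices with the invertible triad: $\tp z_{AB}=z_{lm}\ep^l_A\ep^m_B$, possibly up to the index ordering fixed in the appendix. Since $\ep\neq0$ and the triad is invertible, $z_{[lm]}=0$ is equivalent to $\tp z_{[AB]}=0$.

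The main obstacle is bookkeeping. I need to check that the $\ep^k{}_{lm}$ contraction of $z^m\we E^l$ reproduces exactly the antisymmetric part and not a trace-mixed combination; the contraction identity $\ep^{ABC}\ep_{jk\,\cdot}$ handles this. I also need to confirm the convention for which index of $\tp z_{AB}$ is internal, though the statement is insensitive to that choice.
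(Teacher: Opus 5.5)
Your proof is correct and follows essentially the same route as the paper. Both arguments start from the Gauss constraint $DE_k=0$ and use the splitting $\A=\tilde\A+z$, with the Levi-Civita part annihilating the triad (Lemma~\ref{lma:covder_ep}), to turn it into the algebraic condition $\ep_{BCG}\,\tp z^{GB}=0$. The only difference is bookkeeping: the paper works in components on the density $\ep\,\ep^A_k$, using the Leibniz rule and the derivative of the determinant. You instead apply the splitting directly to the $2$-form $E^k$, which is cleaner. Your worry about trace mixing does not arise, since $\tp z^m\we E^l$ is exactly $z^m_A\ep^{lA}$ times the volume form. Your version also makes explicit that the Gauss constraint is equivalent to $\tp z_{[AB]}=0$.
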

\begin{proof}
From \eqref{eq:gauss_constraint},
\[
    0 = D_A E^A_k =  D_A\left(\ep \ep^A_k\right) = \ep^A_k D_A\ep + \ep D_A\ep^A_k = \ep \ep^A_k  \ep^B_l D_A \ep^l_B + \ep D_A \ep^A_k 
\]
Contracting with $\ep^C_k$ and using  \ref{lma:covder_ep} we have
\begin{align*}
    0 &= \ep^k_C \Big( \ep^A_k \ep^B_l D_A \ep^l_B + \ep^k_C D_A \ep^A_k\Big) = \ep^A_k  \ep^k_F  \left(\ep^F_{\> \cdot  \> CG}\ep^G_l \tp z^l_A - \ep^F_{\> \cdot  \> AG} \ep^G_l \tp z^l_C\right) =  \\ 
&= \ep_{BCG} \gamma^{AB} \ep^G_l\tp z^l_A = \ep_{BCG} \tp z^{GB}
\end{align*}
and thus $\tp z_{[AB]} = 0$.

\end{proof}

\subsubsection{Tangent projection of the second equation}
The tangent projection of \eqref{eq:field_2} is 
\begin{align*}
    \tp\Big(\na\big(\K^k &- \ga \L^k\big)\Big)= -\ep^k{}_{ij} \tp k^i \we \tp \left(\L^j + \ga \K^j\right).
\end{align*}
The left-hand side is zero because of \eqref{eq:naKL_tangent}, while using \eqref{eq:LK_tangent} and \eqref{eq:projections_k} we can rewrite the right-hand side as
\begin{align*}
    0 &= \ep^k{}_{ij} \tp k^i \we \left(\tfrac 1 2 \ep^j{}_{lm}\ep^l \we \ep^m\right) = \tp k^i \we \ep_i \we \ep^k  = \tp h^i \we \ep_i \we \ep^k + \chi_{ij}\ep^i \we \ep^j \we \ep^k  \\ &= \tp h^i \we \ep_i  = \tp h^i_A \ep_{iB} dx^A \we dx^B = \tp h_{AB}  dx^A \we dx^B.
\end{align*}
This implies that the skew-symmetric part of $\tp h_{AB}$ is zero, which is an algebraic constraint 
\begin{equation}
      \boxed{\tp h_{[AB]} = 0.}
      \label{eq:skew_h}
\end{equation}

\subsubsection{Normal projection of the second equation}
The normal projection of \eqref{eq:field_2}, since $\np(\L^k + \ga \K^k) = 0$ by 
\[\np\Big(\na\big(\K^k - \ga \L^k\big)\Big)= -\ep^k{}_{ij} \, \np k^i \we \tp \left(\L^j + \ga \K^j\right) - \ep^k{}_{ij}\tp k^i \we \np\left(\L^j + \ga\K^j\right) \]

Using \eqref{eq:LK_normal}, \eqref{eq:naKL_normal}, \eqref{eq:LK_tangent} and \eqref{eq:projections_k}, we can rewrite it as
\begin{align*}
   \bar N d_\ast N \we \ep^k + D\ep^k &= \ep^k{}_{ij} \np k^i \we \left(\tfrac 1 2 \ep^j{}_{lm} \ep^l \we \ep^m\right) \\ 
     &= \np k^i \we \ep_i \we \ep^k \\ 
     &= \np h^i \we \ep_i \we \ep^k + \bar N d_\ast N \we \ep^k
\end{align*}

Hence, expressing it in adapted coordinates and using Lemma \ref{lma:covder_ep}, we have
\begin{align}
    \ep^E_k \ep^{BCA} \np h^i \ep_{iB} \ep^k_C &=  \ep^E_k \ep^{BCA}D_B \ep^k_C   \nonumber \\
\bar \ep \ep^{BEA} \np h^i \ep_{iB} &= \tp z^{AE} - \tp z \ga^{EA} \label{eq:normal_second_eq}
\end{align}

Let us now consider separately the symmetric and skew-symmetric parts. The symmetric part of \eqref{eq:normal_second_eq} reads
\[
    \tp z^{(AE)} = \tp z \ga^{EA}
\]
while it trace is 
\[\tp z = 0\]
Substituting back into the symmetric part, we obtain
\begin{equation}
   \boxed{ \tp z^{(AE)} = 0 }
    \label{eq:symm_z}
\end{equation}

The skew part of \eqref{eq:normal_second_eq} is
\[
{\tp z^{[AE]}  =  \bar \ep\ep^{BEA} \np h^i \ep_{iB}}
\]
and using Corollary \ref{cor:skew_z} we  obtain 
\begin{equation}
    \boxed{\np h^i = 0}
    \label{eq:np_h}
\end{equation}

% Note that  Equation \eqref{eq:symm_z} and  Corollary \ref{cor:skew_z} imply that
% \begin{equation}
% \tp z^{AB} = 0
% \end{equation} 
% and hence Lemma \ref{lma:covder_ep} implies 
% \begin{equation}
%     D_B \ep^k_C = 0.
%     \label{eq:covder_ep_zero}
% \end{equation}

\subsubsection{Normal projection of the first equation}
The normal projection of the first equation in \eqref{eq:field_spacetime} is
\begin{equation*}
    \np\Big(\na\big(\L^k + \ga \K^k\big)\Big)= -\ep^k{}_{ij} \tp k^i \we \np\left(\K^j - \ga \L^j\right) - \ep^k{}_{ij} \, \np k^i \we \tp\left(\K^j - \ga \L^j\right)
\end{equation*}
Similarly to the previous case, making use of \eqref{eq:KL_tangent}, \eqref{eq:naLK_normal}, \eqref{eq:KL_normal} and \eqref{eq:projections_k}, remembering also the projections we can rewrite it as
\begin{align*}
  \ep_{kij} &\left(\tp h^i - \chi^i{}_m \ep^m \right)\we \ep^j =  \ep_{kij} \de_m \ep^i \we \ep^j - \np z^j \we \ep_k \we \ep_j + \tfrac 1 2 \ep_{jlm}\de_m \ep^l_A \ep^{mA} \we \ep_k \we \ep^j  
\end{align*}
Expanding in coordinates and using \eqref{eq:detepsilon} we have
\begin{align*}
    \ep^{kE}\ep^{BCD} \ep_{kij} &\left(\tp h^i{}_B - \chi^i{}_B \right) \ep^j_C  = \ep^{kE}\ep^{BCD}\Big( \ep_{kij}  \ep^j_C \de_m \ep^i_B + \np z_B\ep_{kC} - \tfrac 1 2 \ep_{jlm} \ep^{mA}  \ep^j_B\ep_{kC} \de_m \ep^l_A\Big)
\end{align*}
Rearranging terms we have
\begin{equation}
    \begin{aligned}
    \tfrac 1 2 \bar \ep \ep^{BDE}&\np z_B = \left(\tp h^{DE} - \chi^{DE} \right) - \ga^{ED}\left(\tp h - \chi \right)  -  \ga^{B(E}_{\vphantom{-1} }\ep^{D)}_i \de_m \ep^i_B + \ga^{ED}_{\vphantom{-1} }\ep^{B}_i \de_m \ep^i_B 
    \end{aligned}
    \label{eq:normal_first_eq}
\end{equation}

The skew-symmetric part of \eqref{eq:normal_first_eq} gives
\[\np z^k = \tfrac 1 2 \ep \tp h^{DE} \ep_{BED}\ep^{jB}\]
and, since $\tp h^{[DE]} = 0$ from \eqref{eq:skew_h}, we can write 
\begin{equation}
    \boxed{\np z^k = 0}
    \label{eq:t_z}
\end{equation}

The symmetric part of \eqref{eq:normal_first_eq} is 
\begin{equation}
0 = \left(\tp h^{(DE)} - \chi^{DE}\right) -  \ga^{ED} \left(\tp h - \chi\right) - \ga^{B(E}_{\vphantom i}\ep_i^{D)} \de_m\ep^i_B + \ga^{ED} \ep^B_i \de_m \ep^i_B
\label{eq:sym_normal_first_eq}
\end{equation}
and can be traced
\begin{align*}
   \ep^k_B \de_m \ep^B_k = \tp h - \chi
\end{align*}
so that, substituting back into \eqref{eq:sym_normal_first_eq}, we have 
\begin{align*}
    \ep_{i(E} \de_m \ep_{D)}^i = \tp h_{(DE)} - \chi_{DE}.
\end{align*}
But the extrinsic curvature can be expressed in terms of the triad as in \eqref{eq:extrinsic_curvature_triad} 
and hence
\begin{equation}
     \boxed{\tp h_{(DE)} = 0}
\end{equation}

\subsection{Projecting the third and fourth equations}
Before starting to project the last two equations, let us summarise the results obtained from the first two equations. The independent ones are

\begin{equation}
    \begin{cases}
        DE_k = 0 \hfill  \qquad \qquad &\text{(3 equations)}\\
        \tp z_{(AB)} = 0 &\text{(6 equations)} \\
        \np z^i = 0 &\text{(3 equations)} \\
        \tp h_{[AB]} = 0 &\text{(3 equations)} \\
        \tp h_{(AB)} = 0 &\text{(6 equations)} \\
        \np h^i = 0 &\text{(3 equations)}
    \end{cases}
    \quad \Longleftrightarrow \quad
        \begin{cases}
        DE^k = 0 \hfill  \\
        \tp z_{(AB)} = 0 \\
        \np \A^i = \np \tilde \A^i \\
        k^i = \tilde k^i
    \end{cases}
\end{equation}
which are $21$ algebraic constraints and $3$ differential ones. Let us note that, as expected, we obtained $h = 0$ and hence $k^i = \tilde k$. However, to fully obtain that $\A^i = \tilde \A^i$, we need to solve Gauss constraint. 

Let us notice that at least some of these algebraic constraints appears as a consequence of the definition of the ABI connection on spacetime.
They reproduce some {\it ad hoc} prescriptions, e.g.~$k^i$ expressed as the extrinsic curvature defined by the frame, usually done in the literature, just as a direct consequence of field equations.

\subsubsection{Normal and tangent projections of the third equation}

The normal projection of \eqref{eq:field_3} is 

\begin{align*}
    0 =&  \np F_k \we \ep^k - \tfrac 1 \ga \np \na k^k \we \ep_k +  \ep_{ijk}\, \np k^i \, \tp k^j \we \ep^k =  \\ 
    =& \ep^{ABC}\Big(F^k_A \ep_{kB} - \tfrac 1 \ga  (\np\na k^k)_A \ep_{kB} + \ep_{ijk} \, \np k^i \, \tp k^j_A \, \ep^k_B \Big) =  \\ 
    =& \ep^{ABC}\bar N\bar \ep \Big( \ep_{B}{}^{FE} \left[\chi_{EA} D_F N + ND_F\chi_{EA}\right] +  \ep^{FE}{}_{B} \chi_{EC}  D_AN   \\ 
    &\quad\qquad +  \tfrac 1 \ga \ep \left[\de_m \chi_{BA} + \chi_{BE} \chi^E{}_A + D_{AB}N\right] \Big) = \\
    =& \ep^{ABC}\ep^{FE}{}_{B}\left[\chi_{EA} D_F N + \chi_{EC} D_A N + ND_F\chi_{EA}\right] = \\ 
    =& N \ep^{ABC}\ep^{FE}{}_{B}D_F\chi_{EA}  = D_A(\chi^A{}_C - \de^A_C \chi)
\end{align*}
Thus, we obtain 

\begin{equation}
\boxed{0 = D_A(\chi^A{}_C - \de^A_C \chi)}
\label{eq:projections_3n}
\end{equation}
This equation is identically satisfied with the gauge fixing $N = 1, \be^A = 0$. This is indeed not evident from expression \eqref{eq:projections_3n}

\begin{lma}
    \label{lma:eq_3n_gauge}
    The normal projection of the third equation in \eqref{eq:field_spacetime} is identically satisfied in any gauge fixing in which $\be^A = 0$ is chosen.
\end{lma}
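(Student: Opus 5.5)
The plan is to make expression \eqref{eq:projections_3n} explicit in terms of the triad and its time derivative, using the algebraic constraints already obtained, and then exploit $\be^A=0$. First, I will use the constraints from the first two equations. These are $\tp z_{(AB)}=0$, Corollary \ref{cor:skew_z}, $\np z^i=0$ and $h^i=0$. Together they imply that $D_A$ acting on triad and space tensors coincides with the Levi-Civita derivative of $\ga_{AB}$, through Lemma \ref{lma:covder_ep}, so that $D_A\ga_{BC}=0$ and $D_A\ep=0$ can be used freely. Next, I will substitute the triad expression of the extrinsic curvature \eqref{eq:extrinsic_curvature_triad}, which relates $\chi_{DE}$ to $\ep_{i(E}\de_m\ep^i_{D)}$. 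With $\vec m=\pa_0-\be^A\pa_A$, setting $\be^A=0$ reduces $\de_m$ to $\pa_0$ (up to the gauge-rotation term, which drops in the symmetrised combination). This leaves $\chi_{AB}=\tfrac12\bar N\,\pa_0\ga_{AB}$ and $\chi=\bar N\,\pa_0\ln\ep$.

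The second step is to rewrite $D_A(\chi^A{}_C-\de^A_C\chi)$ through time derivatives of Christoffel symbols. I will use the standard variation formula
\[
\pa_0\Ga^A_{BC}=\tfrac12\ga^{AD}\left(D_B\pa_0\ga_{DC}+D_C\pa_0\ga_{DB}-D_D\pa_0\ga_{BC}\right),
\]
which holds because $\pa_0\Ga$ is a tensor. I will also use the time derivative of $\Ga^A_{AC}=\pa_C\ln\ep$, namely $\pa_0\Ga^A_{AC}=D_C(\pa_0\ln\ep)$. Contracting these two identities expresses both $D_A(\bar N^{-1}\chi^A{}_C)$ and $D_C(\bar N^{-1}\chi)$ as combinations of $\ga^{AB}\pa_0\Ga_{\ldots}$. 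The difference is then a specific contraction of $\pa_0\Ga$, together with terms proportional to $D_A N$, which come from pulling $\bar N$ through $D_A$.

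The third step, which I expect to be the main obstacle, is to show that this leftover combination vanishes. It consists of the contracted $\pa_0\Ga$ term plus the $D N$ terms multiplying $\chi$. My first attempt will be to differentiate in time the algebraic constraint $\tp z_{(AB)}=0$ together with the Gauss constraint $DE_k=0$. Since these hold on every leaf $S_t$ and, for $\be^A=0$, $\pa_0$ is transverse to the leaves without tangential dragging, $\pa_0$ of these constraints gives exactly a relation between $\pa_0\A^i_A$ and $\pa_0\Ga$. I will then try to feed back the normal projection of the first equation, i.e. \eqref{eq:sym_normal_first_eq}, to cancel the contracted $\pa_0\Ga$ term.

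For the $D_AN$ terms, I will use the intermediate form in the computation preceding \eqref{eq:projections_3n}. There the terms $\chi_{EA}D_FN+\chi_{EC}D_AN$ appear contracted with $\ep^{ABC}\ep^{FE}{}_B$, and I will check that they cancel by the symmetry of $\chi$. If the residual combination does not cancel identically, I will instead check whether the shift terms $D_{(A}\be_{B)}$ in $\chi$ are the only source of non-vanishing contributions. In that case it suffices to show that, at $\be^A=0$, every term is a total time derivative of an already-imposed constraint.
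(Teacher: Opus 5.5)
Your third step is a genuine gap, and it cannot be closed because the statement is false. After your first two steps, the leftover is the contracted Codazzi equation: $D_A(\chi^A{}_C-\delta^A_C\chi)$ equals, up to sign, $R^{(4)}_{\mu\nu}n^\mu(\partial_C)^\nu$. That is the usual momentum constraint, and nothing you have imposed restricts it.

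The constraints you plan to differentiate in time ($\tp z^i=0$ and $DE_k=0$) only fix the independent connection in terms of the triad. Their $\partial_0$ gives $\partial_0A^i_A=\partial_0\tilde A^i_A$, which determines $\partial_0 A$ but says nothing about $\partial_0\Gamma$, and $\partial_0\Gamma$ is built from $\gamma_{AB}$ and $\chi_{AB}$. Likewise \eqref{eq:sym_normal_first_eq}, once \eqref{eq:extrinsic_curvature_triad} is used, is just $\tp h_{(DE)}=0$. With $\beta^A=0$ one has $\chi_{AB}=-\tfrac12\bar N\partial_0\gamma_{AB}$ (paper's sign convention), which can be prescribed freely on a leaf.

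Here is an explicit counterexample. Take $N=1$, $\beta^A=0$, $\gamma_{AB}=\delta_{AB}+t\,h_{AB}(x)$, $e^0=dt$, $e^i$ any triad of $\gamma$, and $\hat\omega=\tilde\omega$.
\begin{itemize}
\item Then $z^i=h^i=0$, and \eqref{eq:field_1}, \eqref{eq:field_2} hold for all $t$. So every constraint you want to use, and its time derivative, is satisfied.
\item At $t=0$ the Christoffel symbols of $\gamma$ vanish and $\chi_{AB}=-\tfrac12h_{AB}$, so $D_A(\chi^A{}_C-\delta^A_C\chi)=-\tfrac12(\partial_Ah_{AC}-\partial_C h_{AA})$.
\item For $h_{12}=h_{21}=x^1$ and all other components zero, this equals $-\tfrac12$ for $C=2$.
\end{itemize}
So neither your cancellation nor your fallback (writing the remainder as a time derivative of imposed constraints) can work. \eqref{eq:projections_3n} is a genuine differential constraint in every gauge, including $N=1$, $\beta^A=0$. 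With $\beta=0$ and $\tp z^i=0$, it is rather $F^i_{BC}E^C_i=0$ that holds identically, by the first Bianchi identity of the $3$-dimensional Riemann tensor.

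The paper argues differently. It writes $\epsilon^{ABC}D_A\chi_{BD}=N\epsilon^{ABC}R^{(4)0}{}_{BDA}$ and splits $R^{(4)0}{}_{BDA}$ into a $\beta_ER^{(4)E}{}_{BDA}$ part and an $R^{(4)}_{0BDA}$ part. The argument breaks down where $\epsilon^{ABC}R^{(4)}_{0BDA}$ is discarded ``by the first Bianchi identity''. Contracting with the spatial $\epsilon^{ABC}$ antisymmetrises only two slots, with the index $0$ held fixed, so no full cyclic sum arises. The Bianchi identity gives only $\epsilon^{ABC}R^{(4)}_{D0AB}=-2\,\epsilon^{ABC}R^{(4)}_{DAB0}$. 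Hence the third term in the paper's $\tfrac13(\cdots)$ average is $-2$ times each of the first two, not equal to them: the average vanishes while the original contraction does not. In fact, at $\beta^A=0$ the paper's first step reads $\epsilon^{ABC}D_A\chi_{BD}=-\bar N\,\epsilon^{ABC}R^{(4)}_{0BDA}$, so dropping that term is equivalent to assuming the conclusion. Neither route can establish the lemma as stated.
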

The proof of this lemma can be found in Appendix \ref{appendix:lemma}.

\subsubsection{Tangent projection of the third equation: Hamiltonian constraint}
The tangent projection of \eqref{eq:field_3} is

\begin{align*}
    &\tp\,\F_k\we \ep^k - \tfrac 1 \ga \tp\,\na k^k \we \ep_k + \tfrac 1 2 \ep_{kij} \tp k^i \we \tp k^j \we \ep^k = \tfrac \La 6 \ep_{ijkl} \ep^i \we \ep^j \we \ep^k \\
    &\ep^{BCA}\Bigg(\tfrac 1 2\eta_{ik} F^i_{BC}\ep^k_A + \tfrac 1 \ga \ep^{kD}\ep_{kA}D_B\chi_{DC}+ \tfrac 1 2\ep_{ijk} \ep^{iF}\chi_{FB} \ep^{jG}\chi_{GC}  \ep^k_A     \Bigg) = \tfrac\La 6 \ep_{ijk}\ep^i_B \ep^j_C \ep^k_A\ep^{BCA} \\ 
    & \tfrac 1 4 \ep^{ABC} \ep^k{}_{lm}  \Ri(3)lm_BC. \ep_{kA} + \tfrac 1 \ga D_B \chi_{AC} \ep^{BCA} + \tfrac 1 2 \ep^{BCA}\ep_{kij}\ep^{iF}\ep^{jG}\ep^k_A\chi_{FB}\chi_{GC}= \La \ep\\ 
    & \Ri(3)BC_BC. + \chi^B{}_B\chi^F{}_F - \chi^{CB}\chi_{BC} = 2\La
\end{align*}
Thus, we obtain the \textbf{Hamiltonian constraint}
\begin{equation}
\boxed{\Rs(3) + \chi^2 - \chi_{AB}\chi^{AB} - 2\La  = 0}
\label{eq:hamiltonian_constraint}
\end{equation}

\subsubsection{Tangent projection of the fourth equation: Momentum constraint}
The Momentum constraint follows from the tangent projection of \eqref{eq:field_4}, where the right-hand side is zero because of \eqref{eq:projections_e}
\begin{equation}
     \tfrac 1 \ga \ep^h{}_{ij}\F^i\we \ep^j + e^h{}_{ij}\tp\left(\na k^i\right) \we \ep^j  - \tfrac 1 \ga  \tp k^h \we \tp k_j  \we \ep^j = 0.
\end{equation}

Expressing it in adapted coordinates and using \eqref{eq:projections_3n}, we have
\begin{align*}
    0 &= \ep^{BCA}\ep_h^G\Big( \tfrac 1 {2\ga} \ep^h{}_{ij} F^i_{BC}\ep^j_A - \ep^h{}_{ij} \ep^j_AD_B(\ep^{iD}\chi_{CD}) - \tfrac 1 \ga \ep^{hD}\chi_{DB}\ep^F_j\chi_{FC}\ep^j_A\Big) = \\ 
    &=   \ep^{BCA}\ep_{hij}\ep^j_A\ep^h_G \tfrac 1 {2\ga} F^i_{BC}  = \tfrac 1 {2\ga} \ep (\de^B_G\ep^C_i - \de^C_G\ep^B_i)  F^i_{BC} = \tfrac 1 {2\ga}\ep\left( F^i_{GC}\ep^C_i - F^i_{BG} \ep^B_i\right)
\end{align*} 
and hence we obtain the \textbf{Momentum constraint}
\begin{equation}
\boxed{F^i_{BC} E^C_i = 0}
\label{eq:momentum_constraint}
\end{equation}

\subsubsection{Normal projection of the fourth equation: Evolution equations}
The evolution equations are obtained from the normal projection of the fourth equation in \eqref{eq:field_spacetime}
\begin{align*}
   \tfrac 1 \ga \ep^h{}_{ij}\> \np\F^i\we \ep^j &+ \ep^h{}_{ij}\>\np\! \na k^i\we \ep^j +\tfrac 1 \ga \np k_j \, \tp k^h \we \ep^j - \tfrac 1 \ga \np k^h \, \tp k_j \we \ep^j = \\ 
   &= -\tp\, \F^h + \tfrac 1 \ga \, \tp\,\na k^h - \tfrac 1 2 \ep^h{}_{ij} \tp k^i \we \tp k^j + \tfrac \La 2 \ep^h{}_{ij}\ep^i \we \ep^j)
\end{align*}
which, in adapted coordinates, reads as
\begin{align*}
    \ep^{ABC}\ep^G_h\Big(\tfrac 1 \ga \ep^h{}_{ij} F^i_A \ep^j_B &+ \ep^h{}_{ij} \ \np\! \na k^i_A \ep^j_B  + \tfrac 1 \ga \np k_j \tp k^h_A \ep^j_B - \tfrac 1 \ga  \np k^h \tp k_{jA}\ep^j_B \Big) = \\ 
   &=  \ep^{ABC}\ep^G_h\Big(\! - \tfrac 1 2 F^h_{AB}  + \tfrac 1 \ga \tp \, \na k^h_{AB} -  \tfrac 1 2 \ep^h{}_{ij} \tp k^i_A \tp k^j_B - \tfrac \La 2  \ep^h{}_{ij} \ep^i_A \ep^j_B\Big)
\end{align*}
The terms containing $\tfrac 1 \ga$ expand to
\begin{align*}
    &\tfrac 1 \ga \ep^{ABC}\ep^G_h\left( \ep^h{}_{ij} F^i_A \ep^j_B + \np k_j \tp k^h_A \ep^j_B - \np k^h \tp k_{jA}\ep^j_B - \tp\> \na k^h_{AB} \right) =\\ 
    &= \tfrac 1 \ga \ep^{ABC}\bar N \left( \ep_{hij}\ep_G^h \ep^j_B \ep^i_D \ep^{DEF}\bar \ep D_E(N \chi_{FA}) - (\chi_{GA}D_B N  -  \chi_{BA} D_G N   - ND_A\chi_{GB})\right) = \\
    &= \tfrac 1 \ga \ep^{ABC} \bar N \left[  D_B(N\chi_{GA}) - D_G(N\chi_{BA}) - D_B(N\chi_{GA})\right] = 0
\end{align*}
while the remaining ones give, using \eqref{eq:extrinsic_curvature_triad} together with \eqref{eq:hamiltonian_constraint},
\begin{align*}
    \ep^{ABC}&\ep^G_h\left(\ep^h{}_{ij} \ \np\! \na k^i_A \ep^j_B + \tfrac 1 2 F^h_{AB} + \tfrac 1 2 \ep^h{}_{ij} \tp k^i_A \tp k^j_B + \tfrac \La 2  \ep^h{}_{ij} \ep^i_A \ep^j_B\right) = \\ 
    &= \bar \ep \ep^{ABC}\ep_{GBD}\bar N \left(\ga^{FD}\de_m \chi_{FA} + N\chi^D{}_E\chi^E{}_A + \ga^{FD}D_{AF}N\right) + \\ 
    & \qquad+ \tfrac 1 2 \bar\ep\ep^{ABC}\ep_{GDE}\left(\tfrac 1 2 \Ri(3)DE_AB.  +\chi^D{}_A\chi^E{}_B\right) + \tfrac \La 2\bar \ep \ep^{ABC}\ep_{GAB} = \\
    &=  \bar \ep \bar N \left(\ga^{FC}\de_m\chi_{FG} + \ga^{FC}D_{GF}N  + \de^C_G\left(-\de_m\chi -  \ga^{AF}D_{AF}N\right)\right) + \\ 
    & \qquad + \bar \ep \left(2\chi_{BG}\chi^{BC} - \Ri(3)C_G.  - \chi\chi^C{}_{G}   + \chi^{AF}\chi_{AF}\de^C_G \right)  + \\ 
    & \qquad + \tfrac 1 2 \bar \ep\de^C_G \left( \Rs(3) - \chi^{AE}\chi_{AE} + \chi^2 - 2\La\right) 
\end{align*}

Hence we obtain
\begin{align*}
     &\de_m\chi_{FG} - N \> \Ri(3)_FG. - N\chi\chi_{FG} +  D_{GF}N + 2N\chi_{DG}\chi^{D}{}_{F}    = \\ 
        & \qquad =  \ga_{FG}\left(\de_m\chi+ \gamma^{AD}D_{AD}N    + N\chi^{AD}\chi_{AD}\right)
\end{align*}

The skew-symmetric part of this equation is identically satisfied; before looking at the symmetric part, lets us consider the trace
\begin{align*}
    \de_m\chi - N \chi_{FG}\chi^{FG} + N\La + \ga^{FG}D_{FG}N = 0
\end{align*} 
Replacing it back in the symmetric part, we obtain the \textbf{evolution equation}
\begin{equation}
        \boxed{\de_m\chi_{FG} = N \left( \Ri(3)_FG. + \chi\chi_{FG} - 2\chi_{DG}\chi^D{}_F - \La\right) - D_{FG}N.}
\end{equation}

\section{Summary of the results}

The equations obtained from the projections of the field equations \eqref{eq:field_spacetime} can be summarised as follows.

We have, in total, $10$ differential constraints, $21$ algebraic constraints and $6$ evolution equations, which sum up to $37$ equations. This is exactly the number of field components to be determined: in fact, from the $40$ components of the field in \eqref{eq:holst_equations}, we used $3$ of them for the boosts that adapt the frame, leaving $37$ of them to be determined.
\begin{align*}
    &
    \text{Differential constraints:} &&
    \\
    & \begin{dcases}
        DE^k = 0  \\
        D_A(\chi^A{}_C - \delta^A_C \chi) = 0 \\
        {}^3\!R + \chi^2 - \chi_{AB}\chi^{AB} - 2\Lambda  = 0 \\
        F^i_{BC}E^C_i = 0 
    \end{dcases} 
    && \begin{aligned}
        &&\text{(3 equations)} \\
        &&\text{(3 equations)} \\
        &&\text{(1 equation)} \\
        &&\text{(3 equations)}
    \end{aligned} \\[2ex]
    &\text{Algebraic constraints:} \\ 
    & \begin{dcases}
        \tp z_{(AB)} = 0 \\
        \np z^i = 0 \\
        \tp h_{[AB]} = 0 \\
        \tp h_{(AB)} = 0 \\
        \np h^i = 0 
    \end{dcases} 
    && \begin{aligned}
        &&\text{(6 equations)} \\
        &&\text{(3 equations)} \\
        &&\text{(3 equations)} \\
        &&\text{(6 equations)} \\
       &&\text{(3 equations)}
    \end{aligned} \\[2ex]
    &\text{Evolution equations:} \\ 
    & \quad \de_m\chi_{FG} = N \left( \Ri(3)_FG. + \chi\chi_{FG} - 2\chi_{DG}\chi^D{}_F - \La\right) - D_{FG}N.  
    && \begin{aligned}
        &&\text{(6 equations)}
    \end{aligned}
\end{align*}

It is also noteworthy that the obtained results are consistent with the standard $3+1$ decomposition of the Einstein equations (see e.g. \cite{wald2009, gourgoulhon}) and in particular with the case $\be \neq 0$. Moreover, we were able to obtain the same results without fixing the gauge $N = 1, \be^A = 0$, which is a common choice in the literature, and gave us the opportunity to understand that the $3$ identically satisfied equations that arose from the normal projection of the third equation in \eqref{eq:field_spacetime} were indeed related to the gauge fixing of the lapse and shift functions. 

Finally, having obtained the expected results is not only a consistency check of the calculations, but also opens the possibility to start exploring Holst-like actions in dimensions different from $4$.

One last remaining matter concerns the compatibility of the constraints with the evolution equations. In fact, we have obtained a set of constraints and evolution equations, but it is still to be proven that imposing the constraint on the initial conditions automatically implies that they are satisfied at any later time. In standard GR, this is guaranteed by the Bianchi identities, but it is still to be proven in this case. A viable strategy could be to make use of the conservation laws, of which Bianchi identities are a particular case in GR, and this is something that we plan to investigate in the future.

%\newpage
\appendix
\numberwithin{equation}{subsection}
\renewcommand{\theequation}{\thesection\arabic{subsection}.\arabic{equation}}

\section{Identities and projections of fields }
\label{appendix:projections}

Let us collect here some result used in the projection of the field equations. Let us remark that eventually these are just direct computations. They are important precisely because they show that the procedure to split equations into constranits and evolution equations is somehow algorithmic, even though annoyingly dependent on the model and fields appearing in the model.

\subsection{Projections needed for the first two equations}

Let us here collect the results used to project frames and the first two equations.

\subsubsection{Frame, coframe and their differentials}
Let us start by recalling the projections of the tetrad and cotetrad, which are needed to decompose fields and equations and are found in \eqref{eq:projections_e}
\begin{equation*}
    \tp(e_0) = 0 \quad\qquad \np(e_0) = 1  \quad\qquad \tp (e_i) = \ep_i \quad\qquad  \np(e_i)  = 0.
\end{equation*}
The differential $de^a = d_\mu e^a_\nu dx^\mu \we dx^\nu$ can be projected as 
   \begin{equation}
     \tp(de^0) = 0 \qquad \np(de^0) = - \bar N d_\ast N  \qquad \tp(de^i) = d_\ast \ep^i \qquad  \np(de^i) = \bar N \de_m \ep^i.
   \end{equation}
The operator $\de_m$ controls the evolution of the triad and and is defined to act on the frame as 
\begin{equation}
    \de_m \ep^i_A := d_0\ep^i_A - \be^B d_B \ep^i_A - d_A\be^B\ep^i_B
\end{equation} 

\subsubsection{Extrinsic curvature and Spin connection and}
These results are standard ones and can be found in \cite{gourgoulhon} and, with a notation similar to the one adopted here, in \cite{BookFatibene}  and we will just summarize the needed components here. We will not report the explicit expressions of the projections of $\chr g \al\be\mu$. 

We can express the extrinsic curvature $\chi_{\mu\nu}$ in adapted coordinates as 
\begin{equation}
    \begin{aligned}    
        & \chi_{00} = \be^A\be^B\chi_{AB} \\ 
        & \chi_{0A} = \chi_{A0} = \be^B\chi_{AB} \\
        &\chi_{AB} = - \bar N \left(\tfrac 1 2 d_0 \ga_{AB}- \tilde D_{(A}\be_{B)}\right)
    \end{aligned}
\end{equation}

Moreover, we have
\begin{equation}
 \chi_{AB} = -\tfrac 1 2 \de_m \ga_{AB} = -e_{i(A} \de_m e_{B)}^i 
 \label{eq:extrinsic_curvature_triad}
\end{equation}
where $\de_m\ga_{AB}$ is computed from $\de_m e^i_A$, since $\ga_{AB} = e^i_A e^j_B \de_{ij}$, and is given by
\begin{equation}
    \de_m \ga_{AB} = d_0 \ga_{AB} + \ga^{AD}D_B\be^B + \ga^{BD}D_D\be^A.
    \label{eq:de_m_metric}
\end{equation}

Consider the Christoffel symbols of $g$, $\chr g \al\be\mu$, and the ones of $\ga$, $\chr \ga AB C$. We can compare their expressions to the ones of the extrinsic curvature and obtain the following relations
\begin{subequations}
\begin{align}
        &\chr g ABC = \chr \ga ABC + \bar N \be^A \chi_{CB} \\ 
        &\chr g 0BC = - \bar N \chi_{BC}      \\
        &\chr g 0B0 = \bar N \left(D_B N - \chi_{BD}\be^D\right)\\ 
        &\chr g AB0 = -\be^A \bar N (D_B N - \chi_{DB} \be^D) + D_B\be^A - N\chi^A{}_B
    \end{align}
\end{subequations}
and, by inversion, 
\begin{subequations}
    \begin{align}
        &\chi_{BC} = - N \chr g 0BC \\
        &\chr \ga ABC = \chr g ABC +  \be^A \chr g 0BC \\ 
        &\chr g 0B0 = \bar N D_B N + \chr g 0BE \be^E.
    \end{align}
    \label{eq:christoffel_extrinsic_curvature}
\end{subequations}

To compute the projections of $\tilde A$ and $\tilde k$, we need to project the frame-compatible spin connection $\tilde \om^{ab}$, defined so that $\tilde\na_\mu e^a_\nu = 0$, namely 
\begin{equation}
    \tilde \om  = e^a_\al \left(\chr g \al\be\mu e^\be_c + d_\mu e^\al_c\right)
\end{equation} 

The resulting projections of the spin connection are
\begin{subequations}
    \begin{align}
        &\tp\tilde\om^{0k} = -\ep^{Bk}\chi_{BA} dx^A && \np \tilde\om^{0k} = \ep^{Bk}\bar N D_B N \\ 
        &\tp\tilde\om^{ij} =: \chr\ep ikA dx^A && \np \tilde\om^{ik} = \ep^{iC}\bar N\de_m \ep^{Ck} - \ep^i_C\ep^{Bk}\chi^{C}{}_{B} 
    \end{align}
    \label{eq:projections_spin_connection}
\end{subequations}

\subsubsection{Connection and curvature }

The projections of the connection $\A^i$  and of the Immirzi tensor $k^i$ are 
\begin{subequations}
    \begin{align}
        \tp(\A^i) &=:  A^i_C dx^C  & \np (\A^i) &= \bar N \left(A^i_0 - \be^C A^i_C\right) =:  a^i \\ 
        \tp(k^i) &=: k^i_C dx^C   & \np (k^i) &= \bar N \left(k^i_0 - \be^C k^i_C\right) 
    \end{align}
    \label{eq:tmp_projections_A_k}
\end{subequations}
Since we are interested in expressing the results in terms of the differences $z^i = \A^i - \tilde \A^i$ and $h^i = k^i - \tilde k^i$, we also need the projections of $\tilde \A^i$ and $\tilde k^i$. 

From \eqref{eq:variables_change} and     \eqref{eq:projections_spin_connection} we get
\begin{subequations}
    \begin{align}
        &\tp \tilde\A^i_A = \tfrac 1 2 \ep^i{}_{jk}\{\ep\}^{jk}_A \qquad \qquad && \np \tilde\A^i = - \tfrac 1 2 \bar N \ep^i{}_{jk}\de_m \ep^j_C \ep^{kC} \\
        &\tp \tilde k^i_A = - \ep^{iB}\chi_{BA}   \qquad \qquad
        && \np \tilde k^i = \ep^{iB} \bar N D_B N
    \end{align}
\end{subequations}
so that we can express the projections of $\A$ and $k$ in terms of the projections of $\tilde A$, $\tilde k$, $z$ and $h$ in the followng way
\begin{subequations}
    \begin{align}
        &A^i_A = \tp z^i_A + \tfrac 1 2 \ep^i{}_{jk}\{\ep\}^{jk}_A \qquad \quad & &
         a^i = \np z^i - \tfrac 1 2 \bar N \ep^i{}_{jk}\de_m \ep^j_C \ep^{kC}     \label{eq:projections_A} \\
        & \tp k^i_A = \tp h^i_A - \ep^{iB}\chi_{BA}   \qquad \quad
        & & \np k^i = \np h^i + \ep^{iB} \bar N D_B N     \label{eq:projections_k}
    \end{align}
\end{subequations}

The projections of the covariant derivative $\na k^k := dk^k + \ep^k{}_{ij} \A^i \we k^j $  are
\begin{subequations}
    \begin{align}
        \tp\,(\na k^i) &= D_A \tp k^i_B dx^A\we dx^B = -D_A\left(\ep^{iD}\chi_{DB}\right)dx^A \we dx^B \label{eq:naktangent}\\ 
        \np\,(\na k^i) &= \begin{aligned}[t]
            \Big[- \ep^{iD}\left(d_0 \chi_{DA} + \chi_{DB} \chi^B{}_A\right) + \bar N\big(&\ep^i_D\be^CD_C\chi^D{}_A \\ &- D_A\left(\ep^{iC}D_CN\right)\big)\Big] dx^A \\ 
        \end{aligned} \\ 
        &= \left[- \ep^{iD}\left(\de_n \chi_{DA} + \chi_{DB} \chi^B{}_A\right) - \bar N D_A\left(\ep^{iC}D_CN\right)\right] dx^A
    \end{align}
    \label{eq:projections_na_A_k}
\end{subequations}

Finally, we can express the covariant derivative of the frame in terms of the difference $z^i$ as follows
\begin{lma}
    \[D_B \ep^k_C = \ep^k {}_{il} \tp z^l_B \ep^i_C\]
    where $D_B$ the projection of the covariant derivative with respect of $\A^i$ on $S_t$. \label{lma:covder_ep}
\end{lma}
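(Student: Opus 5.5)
We need to prove $D_B\ep^k_C = \ep^k{}_{il}\,\tp z^l_B\,\ep^i_C$, where $D_B$ is the spatial covariant derivative induced by $\A^i$, acting on the SU(2) index. We then compare it with the Levi-Civita one.

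The plan is to write $D_B$ acting on the triad as the sum of three pieces: the partial derivative, the spatial Christoffel term, and the $\A$-connection term on the internal index. Explicitly,
\[
D_B \ep^k_C = d_B \ep^k_C - \chr\ga ACB \ep^k_A + \ep^k{}_{lm} A^m_B \ep^l_C .
\]
(The exact sign convention for the gauge term will be fixed by matching $\na a^i = da^i + \ep^i{}_{lm}\A^m\we a^l$.) Using the spatial Levi-Civita connection $\chr\ga ACB$ for the world index is natural, since $D$ is meant as the total covariant derivative on $S_t$.

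We then substitute \eqref{eq:projections_A}, namely $A^i_B = \tp z^i_B + \tfrac12\ep^i{}_{jk}\{\ep\}^{jk}_B$, which splits the expression into two parts:
\[
D_B \ep^k_C = \tilde D_B \ep^k_C + \ep^k{}_{lm}\,\tp z^m_B\,\ep^l_C .
\]
Here $\tilde D$ is the covariant derivative built from $\tilde\A$, that is, from the spatial spin connection $\tp\tilde\om^{ij} = \chr\ep ij A dx^A$ of \eqref{eq:projections_spin_connection}. The second term is exactly the claimed right-hand side after relabelling $l\leftrightarrow i$, $m\to l$, because $\ep^k{}_{lm}\tp z^m_B\ep^l_C = \ep^k{}_{il}\tp z^l_B\ep^i_C$. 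To pass between the $\ep^i{}_{jk}$ form and the $\om^{ij}$ form of the gauge term, we use the inverse relation $\hat\om^{jk}=\tfrac12\ep_i{}^{jk}\A^i$ from \eqref{eq:variables_change}, together with $\ep\ep$ contraction identities.

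The remaining step, which is the main obstacle, is to show $\tilde D_B\ep^k_C=0$, i.e. that the triad is parallel with respect to the induced spatial spin connection together with $\chr\ga ACB$. We would argue this from $\tilde\na_\mu e^a_\nu=0$, the frame compatibility of $\tilde\om$, projected tangentially. Taking $\mu=B$, $\nu=C$ and $a=k$ in adapted coordinates, the full equation reads
\[
d_B e^k_C - \chr g\al CB e^k_\al + \tilde\om^k{}_{jB} e^j_C + \tilde\om^k{}_{0B} e^0_C = 0 .
\]
We then use the adapted frame data: $e^0 = -\cvec n = N dt$, so $e^0_C=0$, and $e^k_0=\be^A\ep^k_A$. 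Next we replace $\chr g ACB$ and $\chr g0CB$ via \eqref{eq:christoffel_extrinsic_curvature}:
\[
\chr g ACB e^k_A + \chr g0CB e^k_0 = \chr g ACB \ep^k_A + \chr g0CB\be^A\ep^k_A = \chr\ga ACB\ep^k_A .
\]
This is precisely the combination $\chr\ga ACB = \chr gACB + \be^A\chr g0CB$. With it, the tangential projection reduces to $d_B\ep^k_C - \chr\ga ACB\ep^k_A + \chr\ep kjB\ep^j_C = 0$, which is $\tilde D_B\ep^k_C=0$.

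The delicate points are bookkeeping ones: the sign and index-placement conventions for $\chr\ep ijA$ versus the $\ep^i{}_{jk}$ dualisation, and making sure the $e^k_0$ contribution is exactly absorbed by the shift term. Once these are checked, the lemma follows.
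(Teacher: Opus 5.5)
Your proposal is correct and follows the same route as the paper: split $A^l_B=\tilde A^l_B+\tp z^l_B$, then remove the $\tilde A$ part using triad compatibility of the induced spin connection. The paper simply asserts this as $\tilde\na_B\ep^k_C=0$. You instead derive it from the tangential projection of $\tilde\na_\mu e^a_\nu=0$, using $e^0_C=0$, $e^k_0=\be^A\ep^k_A$ and $\chr\ga ACB=\chr gACB+\be^A\chr g0CB$, and you make explicit the Christoffel term that the paper's displayed formula for $D_B$ leaves implicit.

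One bookkeeping caution: do the dualisation with the forward definition $\tilde\A^i=\tfrac12\ep^i{}_{jk}\tilde\om^{jk}$, which gives exactly $\ep^k{}_{il}\tilde A^l_B=\tilde\om^k{}_{iB}$. Do not use the printed inverse $\hat\om^{jk}=\tfrac12\ep_i{}^{jk}\A^i$. Its factor $\tfrac12$ is spurious (the correct inverse is $\hat\om^{jk}=\ep_i{}^{jk}\A^i$) and would produce a factor-$2$ mismatch.
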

\begin{proof}
\begin{align*}
    D_B \ep^k_C &= d_B\ep^k_C + \ep^k{}_{il}A^l_B \ep^i_C=  \Big(\tilde\na_B \ep^k_C + \left(\ep^k{}_{il}A^l_B - \tilde \om^k{}_{iB}\right)\ep^i_C\Big) = \nonumber \\
    &= \ep^k{}_{il}\left(A^l_B - \tilde A^l_B\right)\ep^i_C = \ep^k{}_{il} \tp z^l_B\ep^i_C = \ep \ep^F_{\> \cdot  \> CG} \ep^k_F \ep^G_l \tp z^l_B 
\end{align*}    
where we used the fact that $\tilde \na$ is the covariant derivative with respect to the spin connection induced by the triad, which is compatible with it, i.e. $\tilde \na_B \ep^k_C = 0$.
\end{proof}

\subsubsection{Other projections}

We also need the projections of the the two linear combinations of $\K^i$ and $\L^i$ 
\[
    \L^k + \ga \K^k = \tfrac{1 + \ga^2}{2\ga} \ep^k{}_{ij} e^i \we e^j \qquad \K^k - \ga \L^k = - \tfrac{1 + \ga^2}{\ga} e^0 \we e^k
\]%
\begin{subequations}%
    \renewcommand{\theequation}{\theparentequation\alph{equation}}%
    \begin{align}
        & \tp(\L^i + \ga \K^i) = \tfrac{1 + \ga^2}{2\ga} \ep^i{}_{jk} \ep^j \we \ep^k = \tfrac{1 + \ga^2}\ga E^k \label{eq:LK_tangent}\\
        & \np(\L^i + \ga \K^i) =  0 \label{eq:LK_normal}\\ 
        &\tp(\K^i - \ga\L^i) = 0 \label{eq:KL_tangent}\\ 
        & \np(\K^i - \ga\L^i) = -\tfrac{1 + \ga^2}{\ga} \ep^i\label{eq:KL_normal}
        \end{align}
\end{subequations}

To compute the projections of their covariant derivatives
\begin{align*}
    \na \left(\L^k + \ga \K^k\right) &:= d\left(\L^k + \ga \K^k\right) + \ep^k{}_{ij} \A^j \we \left(\L^i + \ga \K^i\right) \\
    \na \left(\K^k - \ga \L^k\right) &:= d\left(\K^k - \ga \L^k\right) + \ep^k{}_{ij} \A^j \we \left(\K^i - \ga \L^i\right)
\end{align*}
we first need the projections of their differentials 
 \[
    d\left(\L^k + \ga \K^k\right) = \tfrac{1 + \ga^2}{\ga} \ep^k{}_{ij} de^i \we e^j \qquad d\left(\K^k - \ga \L^k\right) = - \tfrac{1 + \ga^2}{\ga} \left( de^0 \we e^k + e^0 \we de^k\right).
\]

For these, we have
\begin{subequations}
    \renewcommand{\theequation}{\theparentequation\alph{equation}}
    \begin{align}
        \tp\left(d\left(\L^k + \ga \K^k\right)\right) &= \tfrac{1 + \ga^2}\ga \ep^k{}_{ij} d_\ast \ep^i \we \ep^j \label{eq:dLK_tangent}\\ 
        \tp\left(d\left(\K^k - \ga \L^k\right)\right) &= 0\label{eq:dKL_tangent}\\ 
        \np\left(d\left(\L^k + \ga \K^k\right)\right) &= \tfrac{ 1 + \ga^2}\ga \ep^k{}_{lm} \de_n \ep^l \we \ep^m \label{eq:dLK_normal}\\ 
        \np\left(d\left(\K^k - \ga \L^k\right)\right) &= \tfrac{1 + \ga^2}\ga \left(\bar N d_\ast N \we \ep^k - \de_n \ep^k\right)\label{eq:dKL_normal}
    \end{align}
\end{subequations}
and therefore we have 
\begin{subequations}
    \renewcommand{\theequation}{\theparentequation\alph{equation}}
    \begin{align}
        &\tp\left(\na \left(\L^k + \ga \K^k\right)\right) = \tfrac{1 + \ga^2}\ga\ep^k{}_{ij} D \ep^i \we \ep^j  \label{eq:naLK_tangent}\\
        &\tp\left(\na \left(\K^k - \ga \L^k\right)\right) = 0  \label{eq:naKL_tangent} \\
        &\np\left(\na \left(\L^k + \ga \K^k\right)\right) = \tfrac{1 + \ga^2}\ga \left( \ep^k{}_{ij} \de_n \ep^i \we \ep^j - a^j \we e^k \we e_j\right) \label{eq:naLK_normal} \\ 
        &\np\left(\na \left(\K^k - \ga \L^k\right)\right) = -\tfrac{1 + \ga^2}\ga \left(\bar N d_\ast N \we \ep^k + D\ep^k\right)\label{eq:naKL_normal}
    \end{align}
\end{subequations}

\subsection{Specialising projections for the third and fourth equations}

Before projecting the third and fourth equations in \eqref{eq:field_spacetime}, it is useful to rewrite some of the previous found results after imposing the algebraic constraints. We stress that it is just a matter of rewriting the results via direct computations and substitutions.

First of all, since $z = 0$ we have $A^i_A = \tilde A^i_A$ and hence from \eqref{eq:projections_A} 
\begin{equation}
    A^i_A = \tfrac 1 2 \ep^i{}_{jk}\{\ep\}^{jk}_A \qquad a^i = - \tfrac 1 2 \bar N \ep^i{}_{jk}\de_m \ep^j_C \ep^{kC}
    \label{eq:projections_A_z=0}
\end{equation}
From  \eqref{eq:projections_k} we have
\begin{subequations}
\begin{align}
   &\tp k^i = k^i_A dx^A = - \ep^{iB} \chi_{BA} dx^A \\
   & \np k^i = \ep^{iB} \bar N D_B N  \end{align}    
\end{subequations}
and from \eqref{eq:projections_na_A_k}
\begin{subequations}
\begin{align}
   & \tp\,(\na k^i)_{AB} = -e^{iD} D_A\chi_{DB}  \\  
   & \np(\na k^k)_A = - \bar N e^{kD}\left(\de_m \chi_{DA} + N\chi_{DE} \chi^E{}_A + D_{AD}N\right).
\end{align}    
\end{subequations}

Since the third and fourth equations in \eqref{eq:field_spacetime} also involve the curvature of the connection $\A^i$, namely $\F^i = \tfrac 1 2 \big( d_\mu \A^i_\nu - d_\nu \A^i_\mu - \ep^i{}_{jk}  \A^j_\mu \A^k_\nu \big)$,  before continuing we need to write its projections  
\begin{equation}
    \tp(\F^i) := \tfrac 1 2 F^i_{BC} dx^B\we dx^C \qquad \qquad  \np(\F^i) := F^i_A dx^A 
\end{equation}

where
    \begin{align*}
         F^i_{BC} &= d_B A^i_C - d_C A^i_B - \ep^i{}_{jk} A^j_B A^k_C  = \\ 
        &= \tfrac 1 2 \ep^i{}_{jk} d_B \{e\}^{jk}{}_C - \tfrac 1 2 \ep^i_{jk} d_C \{e\}^{jk}{}_B - \ep^i{}_{jk} \left(\tfrac 1 2 \ep^j{}_{lm}\{e\}^{lm}{}_B\right)\left(\tfrac 1 2 \ep^k{}_{pq}\{e\}^{pq}{}_C\right) = \\ 
      \Hide{  = \tfrac 1 2 \ep^i{}_{jk} \left(d_B \{e\}^{jk}{}_C - d_C \{e\}^{jk}{}_B + \{e\}^{j}{}_{lB} \{e\}^{lk}{}_C - \{e\}^{j}{}_{lC} \{e\}^{lk}{}_B\right) = }
        &= \tfrac 1 2 \ep^i{}_{jk} \Ri(3)jk_BC.
    \end{align*}
        \begin{align*}
            F^i_A &= \bar N \left[\left(d_0 A^i_A - d_A A^i_0 - \ep^i{}_{jk} A^j_0 A^k_A\right) - \be^C F^i_{CA} \right] = \\ 
            &= \bar N \left( d_0 A^i_A - D_A A^i_0 - \be^C F^i_{CA}\right) = \bar N \left[ d_0 A^i_A - D_A \left(Na^i + \be^C A^i_C \right) - \be^C F^i_{CA}\right] 
        \end{align*} 

Now, using \eqref{eq:projections_A_z=0} and Lemma \ref{lma:covder_ep} and the definition of $\de_m$ we can express 
\begin{align*}
    D_A(Na^i + \be^C A^i_C) &= -\tfrac 1 2 \ep^i{}_{jk}D_A\left(d_0 \ep^j_C - \be^E D_E \ep^j_C - D_E \be^E \ep^i_C\right)\ep^{kC} = \\ 
    &=  -\tfrac 1 2 \ep^i{}_{jk}\ep^{kC}\left[D_A(d_0 \ep^j_C)- D_A(\be^E D_E \ep^j_C )- D_A(D_C \be^E \ep^j_E)\right]  = \\ 
    \Hide{= -\tfrac 1 2 \ep^{ijk} \ep_k^C D_A(d_0 \ep_{jC})  + \tfrac 1 2 \ep^{ijk} \ep_k^C \ep_j^E D_A D_C \be_E =} 
    \Hide{= \begin{aligned}[t]
        -\tfrac 1 2 \ep^{ijk} \ep^C_k &\left[d_0(D_A \ep_{jC}) - \ep_{jml}\ep^m_C d_0 A^l_A + \ep_{jE} d_0\chr \ga E AC\right]  \\ & + \tfrac 1 2 \ep^{ijk} \ep_k^C \ep_j^E D_A D_C \be_E =
    \end{aligned}
        }
    &= d_0 A^i_A + \tfrac 1 2 \ep^{ijk} \ep_{jE}\ep_k^C d_0 \chr \ga E AC + \tfrac 1 2 \ep^{ijk} \ep_k^C \ep_j^E D_A D_C \be_E
\end{align*}
so that 
\begin{align*}
F^i_A &= - \tfrac 1 2 \bar N\ep^{ijk} \left[  \ep_{jE}\ep_k^C d_0 \chr \ga E AC + \ep_k^C \ep_j^E D_A D_C \be_E + \be^C \Ri(3)_jkCA.\right] = \\ 
&= -\tfrac 1 2 \bar N \ep^{ijk}\Big\{\begin{aligned}[t] &\ep_{jE} \ep_k^C  d_0\left[\tfrac 1 2 \ga^{EF} \left(d_A \ga_{FC} + d_C \ga_{FA} - d_F \ga_{AC}\right)\right] + \\\ & + \ep_k^C \ep_j^E D_A D_C \be_E + \be^D \Ri(3)_jkDA.\Big\}\end{aligned}=
\\ 
    \Hide{= \tfrac 1 2\bar N \bar \ep \ep^i_B \ep^{BCE}  \Big[\begin{aligned}[t]- d_0&\ga_{ED} \chr \ga D AC + \tfrac 1 2  \left(d_A d_0 \ga_{EC} + d_C d_0 \ga_{EA} - d_E d_0 \ga_{AC}\right) + \\ &+  D_AD_C\be_E +  \be^D \Ri(3)_ECDA.\Big] \end{aligned}=} 
    &= \tfrac 1 2\bar N \bar \ep \ep^i_B \ep^{BCE}   \Big[\begin{aligned}[t]- d_0\ga_{ED}& \chr \ga D AC + d_C d_0 \ga_{EA} + d_0 \ga_{AD} \chr \ga D CE  + \\ &+ D_AD_C\be_E +  \be^D \Ri(3)_ECDA.\Big]\end{aligned} = \\
    &=  \tfrac 1 2\bar N \bar \ep \ep^i_B \ep^{BCE}  \left[ -D_C d_0 \ga_{EA} + D_AD_C\be_E +  \be^D \Ri(3)_ECDA.\right] = \\ 
    &= \tfrac 1 2\bar N \bar \ep \ep^i_B \ep^{BCE} \left[- D_C\left(d_0 \ga_{EA} \, \pm\,  D_A\be_E \pm D_E \be_A\right)  + D_AD_C\be_E +  \be^D \Ri(3)_ECDA.\right] = \\ 
     \Hide{= \bar N \bar \ep \ep^i_B \ep^{BCE} \left[D_C\left(N \chi_{EA}\right) + \tfrac 1 2 \left(-D_C D_A \be_E - D_C D_E \be_A + D_AD_C\be_E +  \be^D \Ri(3)_ECDA.\right)\right] = }
    \Hide{ = \bar N \bar \ep \ep^i_B \ep^{BCE} \left[D_C\left(N \chi_{EA}\right) + \tfrac 1 2 \Big( - D_{CE}\be_A  +\tfrac 1 2  \be^D \Ri(3)_DACE. +  \be^D \Ri(3)_DECA. + \be^D  \Ri(3)_ECDA. \Big) \right] = }
    &=  \bar N \bar \ep \ep^i_B \ep^{BCE} D_C\left(N \chi_{EA}\right)
\end{align*}

\section{Proof of Lemma \ref{lma:eq_3n_gauge}}
\label{appendix:lemma}
The proof is based on the following lemma
\begin{lma}
    \[D_A \chi_{BD} \ep^{ABC} =  \ep^{ABC} \bar N \be^E \left(\Ri(3)_E BDA. - \chi_{EA} \chi_{BD}\right)\]
    where $D_A$ is the projection of the covariant derivative with respect of $\A^i$ on $S_t$,
\end{lma}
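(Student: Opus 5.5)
The identity is a Codazzi-type relation written in adapted coordinates, with the shift $\be^A$ left free. My plan is to derive it directly from the coordinate expression of the extrinsic curvature, using the algebraic constraints already obtained. Since $\tp z^i_A=0$, Lemma \ref{lma:covder_ep} gives $D_B\ep^k_C=0$. Hence, on spatial tensors such as $\chi_{BD}$, the operator $D_A$ is just the Levi-Civita derivative $\tilde D_A$ of $\ga_{AB}$ (via the triad), and I can freely commute $D$'s using ${}^3R$.

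\textbf{Step 1 (substitute $\chi$).} I would insert
\[
\chi_{AB} = -\bar N\left(\tfrac 1 2 d_0\ga_{AB} - D_{(A}\be_{B)}\right)
\]
into $\ep^{ABC}D_A\chi_{BD}$. The antisymmetrisation in $A,B$ splits the result into three kinds of term.
\begin{itemize}
\item Terms with $D_A\bar N$ multiplying $N\chi_{BD}$. These should be absorbed later, or shown to recombine into the $\chi\chi$ term together with the contributions below.
\item A term $\ep^{ABC}D_A d_0\ga_{BD}$. Using
\[
d_0\chr\ga E AD=\tfrac12\ga^{EF}\left(D_A d_0\ga_{FD}+D_D d_0\ga_{FA}-D_F d_0\ga_{AD}\right),
\]
the antisymmetrised derivative of $d_0\ga_{BD}$ equals $\ga_{DE}\, d_0\chr\ga E AB$-type pieces. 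These drop under $\ep^{ABC}$ by the symmetry of the Christoffel symbols, plus a remainder of the form $\ga_{BE}\,d_0\chr\ga E AD$.
\item Second derivatives of the shift, $\ep^{ABC}D_A D_{(B}\be_{D)}$.
\end{itemize}

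\textbf{Step 2 (shift terms to curvature).} I would commute the second derivatives of $\be$ using the Ricci identity $[D_A,D_B]\be_E=\Ri(3)_EFAB.\be^F$, exactly as was done at the end of Appendix \ref{appendix:projections} in computing $F^i_A$. There the combination $-D_{CE}\be_A+\tfrac12\be^D\big(\Ri(3)_DACE.+\dots\big)$ collapsed. I expect the same mechanism to leave $\ep^{ABC}\bar N\be^E\,\Ri(3)_EBDA.$, after using the first Bianchi identity to cancel the cyclic combination.

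\textbf{Step 3 (remaining time-derivative and quadratic terms).} The leftover $d_0\chr\ga{}{}{}$ contributions are the hardest part, since they are not manifestly curvature terms. I would first try to eliminate them using a field equation, namely the momentum constraint \eqref{eq:momentum_constraint} together with the appendix formula
\[
F^i_A=\bar N\bar\ep\,\ep^i_B\ep^{BCE}D_C(N\chi_{EA}).
\]
Contracted appropriately, this formula equates $\ep^{BCE}D_C(N\chi_{EA})$ to a normal component of $\F$. The first Bianchi-type symmetries of $\F$ (equivalently the $\ga$-independent part of \eqref{eq:field_4} already shown to vanish) should then kill the time-derivative piece.

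What survives should be the $D_A\bar N$ terms and the products $\chi\chi$ coming from $d_0\ga=\de_m\ga-\mathcal L_\be\ga$. Writing $d_0\ga_{AB}=-2N\chi_{AB}+2D_{(A}\be_{B)}$ and re-substituting produces $-\bar N\be^E\chi_{EA}\chi_{BD}$ after antisymmetrisation. The $D\bar N$ terms must cancel against those of Step 1.

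\textbf{Main obstacle.} I expect the difficulty to be bookkeeping: matching the $\be$-dependent terms generated by $d_0$ versus $\de_m$, and checking that every lapse-derivative term cancels. If the direct route stalls, the fallback is to start instead from the normal projection of the Bianchi identity $\na\F=0$. Its $\be$-terms arise naturally from $\np$ in \eqref{eq:projections_alpha}, and it yields the same antisymmetrised $D\chi$.

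Once the identity is established, Lemma \ref{lma:eq_3n_gauge} follows immediately. Setting $\be^A=0$ makes the right-hand side vanish, and contracting $\ep^{ABC}D_A\chi_{BD}=0$ over indices reproduces $D_A(\chi^A{}_C-\de^A_C\chi)=0$.
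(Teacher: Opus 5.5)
\textbf{There is a genuine gap in Step~3, and it cannot be closed, because the statement as written is false.}

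\textbf{Why Steps 1--3 do not work.} Steps~1--2 do not reduce anything. Under $\epsilon^{ABC}$, only the $D_AD_B\beta_D$ piece is a commutator. The other piece, $\epsilon^{ABC}D_AD_D\beta_B$, survives the Ricci identity. It recombines with your remainder $\epsilon^{ABC}\gamma_{BE}\,d_0\{\gamma\}^E{}_{AD}$ into the combination $(d_0-\mathcal L_\beta)\{\gamma\}^E{}_{AD}$. Together with the lapse-gradient terms, that is an exact rewriting of $\epsilon^{ABC}D_A\chi_{BD}$ itself, not a reduction of it. The appendix computation of $F^i_A$ that you want to imitate shows exactly this: there the shift and curvature terms collapse, and what survives is the Codazzi combination $\epsilon^{BCE}D_C(N\chi_{EA})$, not a term proportional to $\beta$.

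The inputs you propose for removing the time derivatives carry no information:
\begin{itemize}
\item Once $z=0$, $F^i_{BC}=\tfrac12\epsilon^i{}_{jk}\,{}^3R^{jk}{}_{BC}$. So $F^i_{BC}E^C_i$ is a cyclic sum of ${}^3R$ and vanishes identically.
\item The formula for $F^i_A$ only re-expresses your left-hand side, up to lapse-gradient terms, as a normal component of the curvature. Nothing forces that component to vanish.
\item The cancellation of the $1/\gamma$ terms in the normal projection of \eqref{eq:field_4} is an identity, not an equation.
\end{itemize}
Even the genuine momentum constraint $D_A(\chi^A{}_C-\delta^A_C\chi)=0$ only fixes the contraction of $M^C{}_D:=\epsilon^{ABC}D_A\chi_{BD}$ with $\epsilon_{CDF}$, which is three of its nine components. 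On-shell, the symmetric trace-free remainder is the magnetic part of the Weyl tensor, which is unconstrained.

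\textbf{Why no route can work.} On a fixed leaf, $\gamma_{AB}$ and $\chi_{AB}$, and hence the left-hand side, depend only on the hypersurface and the metric, not on $\zeta$. Replacing $\zeta$ by $\zeta+b^A\partial_A$, with $b$ tangent to the leaves, keeps the same foliation and changes the shift arbitrarily. The right-hand side, however, is linear in $\beta^E$.

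A concrete on-shell counterexample is de Sitter space in closed slicing, $-dt^2+\ell^2\cosh^2(t/\ell)\,d\Omega_3^2$ with $\Lambda=3/\ell^2$, at the leaf $t=0$ with any nonzero shift. There $\chi_{AB}=0$, so the left-hand side vanishes. But ${}^3R_{EBDA}=\ell^{-2}(\gamma_{ED}\gamma_{BA}-\gamma_{EA}\gamma_{BD})$ makes the right-hand side equal to $-\ell^{-2}\bar N\,\epsilon^{ABC}\beta_A\gamma_{BD}\neq0$. For $\Lambda\le0$, static slices of anti-de Sitter or Schwarzschild give the same conclusion.

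\textbf{Where the paper's proof goes wrong.} The paper reaches the stated formula only by discarding $\epsilon^{ABC}\,{}^4R_{0BDA}$ ``by the first Bianchi identity''. The cyclic identity actually gives $\epsilon^{ABC}\,{}^4R_{0BDA}=\tfrac12\epsilon^{ABC}\,{}^4R_{D0AB}$, which is nonzero in general. It vanishes only after contracting $D$ with $C$, which is what the last line of that computation tacitly does. What does hold is the first half of the paper's computation, the Codazzi relation $\epsilon^{ABC}D_A\chi_{BD}=N\epsilon^{ABC}\,{}^4R^0{}_{BDA}=-\epsilon^{ABC}n^\mu\,{}^4R_{\mu BDA}$, which is shift independent.

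Consequently your closing deduction, and Lemma~\ref{lma:eq_3n_gauge}, also fail. $D_A(\chi^A{}_C-\delta^A_C\chi)=0$ is the usual momentum constraint, a genuine restriction on $(\gamma_{AB},\chi_{AB})$ in every gauge, $\beta^A=0$ included.
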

\begin{proof}    

By direct computation, exploiting the symmetry of $\chi_{AB}$ and making use of \eqref{eq:christoffel_extrinsic_curvature}, we have
\begin{align*}
D_A \chi_{BD}\ep^{ABC} =& \left(-D_D\chi_{BA} + D_A\chi_{BD}\right)\ep^{ABC} \\
=& \left( D_D \left(N \chr g0BC \right)  - D_A \left(N \chr g 0BD\right)\right) \\  
 \Hide{= \begin{aligned}[t]\Big[\big(& \chr g 0BC D_D N - \chr g0BD D_AN\big) + N\Big( -d_D\chr g0BA - \chr g 0ED \chr\ga EBA + \\ & - \chr g 0BE \chr\ga EAD + d_A \chr g0BD +\chr g 0EA\chr\ga EBD  + \chr g 0BE\chr\ga EDA \Big)\Big]\ep^{ABC}  =\end{aligned}} 
 \Hide{ = \begin{aligned}[t] \ep^{ABC}\Big[&\big(\chr g 0BC D_D N - \chr g0BD D_AN\big) + N\Big(d_D \chr g0BA - d_A \chr g0BD + \\ &+ \chr g0ED \left(\chr gEBA + \chr g0BA\be^E\right) - \chr g0EA\left(\chr gEBD + \chr g0BD\be^E\right)\Big)\Big] =  \end{aligned}}
=&   \begin{aligned}[t]\ep^{ABC}\Big[&\big(\chr g 0BC D_D N - \chr g0BD D_AN\big)  + \\ &+ N \Big(d_D \chr g0BA - d_A \chr g0BD + \underbrace{\chr g0ED\chr gEBA+  \chr g00D \chr g0BA}_{\chr g0\al D \chr g\al BA} +\\ &-   \chr g00D\chr g0BA \underbrace{ - \chr g0EA\chr gEBD -   \chr g00A\chr g0BD}_{- \chr g0\al A \chr g\al BD} +\\ &+   \chr g 00A \chr g 0BD  + \be^E\left(\chr g0ED \chr g 0BA - \chr g 0EA \chr g 0BD \right)\Big)\Big]= \end{aligned}\\
  \Hide{ = \begin{aligned}[t]\ep^{ABC}\Big[&\big(\chr g 0BC D_D N - \chr g0BD D_AN\big) + N \Big(\Ri(4)0_BDA.  + \be^E\big(\chr g 0ED\chr g 0BA + \\ &- \chr g 0EA\chr g0BD \big) - \chr g 00D \chr g 0BA + \chr g 00A \chr g 0BD \Big)\Big]=\end{aligned}}
=& N \ep^{ABC} \,  \Ri(4) 0_BDA.  
\end{align*} 

Next, we can express $\Ri(4)0_BDA.$ in terms of the Riemann tensor of the spatial metric $\ga$, as defined in \ref{eq:projections_metric}, and of the extrinsic curvature $\chi_{AB}$ as follows:
\begin{align*}
    \Ri(4)0_BDA. &= g^{0\al}\ \Ri(4)_\al BDA. = \bar N^2\left(  \Ri(4)_C BDA. \be^C - \Ri(4)_0BDA. \right) = \\ &= \bar N^2\left( g_{\al C}\Ri(4)\al_BDA.\be^C  - \Ri(4)_0BDA.  \right) =  \\ &= \bar N^2\left( g_{EC}\Ri(4)E_BDA. \be^C + g_{0C}\Ri(4)0_BDA. \be^C - \Ri(4)_0BDA.  \right) = \\ &= \bar N^2\left( \ga_{EC}\Ri(4)E_BDA. \be^C + \ga_{EC}\be^E \be^C \Ri(4)0_BDA. - \Ri(4)_0BDA.  \right)
\end{align*}

Hence  
\begin{equation*}
     \Ri(4)0_BDA. =\tfrac 1 {N^2 - |\be|^2} \left( \be_E \Ri(4)E_BDA. - \Ri(4)_0BDA.\right)
\end{equation*}

The second term vanishes because of (first) Bianchi identities
\begin{align*}
    \ep^{ABC} \ \Ri(4)_0BDA. &= \ep^{0ABC}\  \Ri(4)_DAB0. = \\ &= \tfrac 1 3 \left(\Ri(4)_DAB0. \ep^{AB0C}+ \Ri(4)_DB0A. \ep^{B0AC} + \Ri(4)_D0AB. \ep^{0ABC}\right) = \\ 
    & = \tfrac 1 3 \Ri(4)_C\al\be\mu. \ep^{\al\be\mu C} = 0
\end{align*}

We can express the first term in terms of the Riemann tensor of the spatial metric $\ga$ 
\begin{align*}
    \Ri(4)E_BDA. &= d_D \chr g EBA + \chr g E D \al \chr g \al BA - [D\leftrightarrow A]= \\ 
    &= d_D\left(\chr \ga EBA + \bar N \be^E\chi_{BA}\right) + \\ 
    & \qquad - \bar N\Big( \!-\be^E \bar N (D_D N - \chi_{FD} \be^F) + D_D\be^E - N\chi^E{}_D\Big)\chi_{BA} + \\ 
    & \qquad + \left(\chr \ga E D F + \bar N \be^E \chi_{DF}\right)\left(\chr \ga F B A + \bar N \be^F \chi_{BA}\right) - [D\leftrightarrow A] = \\
    \Hide{ =\left( d_D\chr \ga EBA + \chr \ga E D F \chr \ga F B A - [D\leftrightarrow A]\right) + \underline{d_D\left(\bar N \be^E \chi_{BA}\right)} + }
     \Hide{\qquad - \dunderline{d_A\left(\bar N \be^E \chi_{BD}\right) } + \left(\be^E \bar N^2 (D_D N - \chi_{FD} \be^F) - \bar N D_D\be^E + \chi^E{}_D\right)\chi_{BA}  } 
     \Hide{ \qquad + \bar N^2 \be^E \be^F \chi_{DF} \chi_{BA}  + \left(\be^E \bar N^2 (D_A N - \chi_{FA} \be^F) - \bar N D_A\be^E + \chi^E{}_A\right)\chi_{BD} + }
    \Hide{ \qquad + \underline{\bar N \chr \ga EDF \be^F \chi_{BA}} +\dunderline{ \bar N \be^E \chi_{DF } \chr \ga FBA} - \dunderline{\bar N \chr \ga EAF \be^F \chi_{BD}} + }
    \Hide{\qquad - \underline{\bar N \be^E \chi_{AF} \chr \ga FBD} +  \underline{\bar N \be^E \chi_{BF}\chr \ga FDA} - \dunderline{\bar N \be^E \chi_{DF} \chr \ga FBA}  =} 
    &=  \Ri(3)E_BDA. + D_D\left(\bar N \be^E \chi_{BA}\right) - D_A\left(\bar N \be^E \chi_{BD}\right) - D_D(\bar N\be^E)  \chi_{BA} + \\
    &  \qquad + D_A(\bar N \be^E)\chi_{BD}   + \chi^E{}_D \chi_{BA} - \chi^E{}_A \chi_{BD} \\ 
    &= \Ri(3)E_BDA. + \bar N \be^E D_D \chi_{BA} - \bar N \be^E D_A \chi_{BD} + \chi^E{}_D \chi_{BA} - \chi^E{}_A \chi_{BD} 
\end{align*}

Hence 
\begin{align*}
    D_A \chi_{BD} \ep^{ABC}   \Hide{= \tfrac N {N^2 - |\be|^2} \ep^{ABC} \be_E \big(\begin{aligned}[t]
        &\Ri(3)E_BDA. + \bar N \be^E D_D \chi_{BA} - \bar N \be^E D_A \chi_{BD} + \\ &+ \chi^E{}_D \chi_{BA} - \chi^E{}_A \chi_{BD} \big) 
    \end{aligned}= }
    &= \tfrac 1 {N^2 - |\be|^2} \ep^{ABC} \left(N \be^E \ \Ri(3)_E BDA. - |\be|^2 D_A \chi_{BD} - N \be^E \chi_{EA} \chi_{BD}\right)
\end{align*}

and finally 

\begin{align*}
     D_A \chi_{BD} \ep^{ABC} = \ep^{ABC} \bar N \be^E \left(\Ri(3)_E BDA. - \chi_{EA} \chi_{BD}\right)
\end{align*}

\end{proof}

This lemma shows that the right hand side of \eqref{eq:projections_3n}
is  identically zero if $\be^A = 0$.

\paragraph{Acknowledgements} This paper is also supported by INdAM-GNFM. We also acknowledge the contribution of INFN (Iniziativa Specifica QGSKY
and Iniziativa Specifica Euclid), the local research project Metodi Geometrici
in Fisica Matematica e Applicazioni (2025) of Dipartimento di Matematica of
University of Torino (Italy).

 \nocite{*}
\printbibliography

\end{document}